\newcommand\SANSCOMMENTAIRE[1]{}
\newcommand\shorter[2][]{}
\newcommand\shortermcu[1]{}
\newcommand\Lesfonctionsquoi{\mathcal{F}}
\newcommand\manonclass{\linearderivlength^\bullet}
\newcommand\MANONlim{ELim}
\newcommand\MANONlimd{E_{2}Lim}
\newcommand\manonclasslim{\bar{\manonclass}}
\newcommand\polynomialb{ $\bar{\fonction{cond}}$-polynomial}
\newcommand{\signb}[1]{\bar{\fonction{cond}}(#1)}
\newcommand\unaire[1]{with respect to the value of $#1$}
\newcommand\Encode{\textit{Encode}}
\newcommand\Decode{\textit{Decode}}
\newcommand\MYVEC[1]{\vec{#1}}
\newcommand\THEPAPIERS{\cite{MFCS2019,MFCSJournalProject}}
\newcommand\dyadic{\mathbb{D}}
 \newcommand\R{\mathbb{R}}
\newcommand\Q{\mathbb{Q}}
\newcommand\N{\mathbb{N}}
 \newcommand\Z{\mathbb{Z}}
 \newcommand\RR{\R}
\newcommand\Greg{Grzegorczyk}
\newcommand\BRN{{\rm BRN}}
\newcommand\ODE{{\rm ODE}}
\newcommand\LI{{\rm LI}}
\newcommand\Elem{\mathcal{E}}
\newcommand\Gregn{\mathcal{E}_n}
\newcommand\PR{\mathcal{P}\mathcal{R}}
\newcommand{\fonction}[1]{\textrm{#1}}
\newcommand\projection[2]{\mathbf{\pi}_{#1}^{#2}}
\newcommand{\sucs}{\mathbf{s}}
\newcommand\plus{\mathbf{+}}
\newcommand\minus{\mathbf{-}}
\newcommand\gE{\mathbf{E}}
\newcommand\dint[4]{\int_{#1}^{#2}{#3}{\delta #4}}
\newcommand\fallingexp[1]{\overline{2}^{#1}}
\newcommand{\succun}[1]{\mathbf{1}({#1})}
\newcommand{\succzero}[1]{\mathbf{0}({#1})}
\newcommand{\zero}{\mathbf{0}}
\newcommand{\sign}[1]{\fonction{sg}(#1)}
\newcommand{\cond}[3]{\fonction{if}(#1,#2,#3)}
\newcommand{\tu}[1]{\mathbf{#1}}
\newcommand{\cp}[1]{\mathbf{#1}}
\newcommand{\Ptime}{\cp{PTIME}} 
\newcommand{\classP}{\cp{P}}      
\newcommand{\NP}{\cp{NP}}
\newcommand{\FPtime}{\cp{FPTIME}}
\newcommand{\FP}{\cp{FP}}
\newcommand{\Pspace}{\cp{PSPACE}}
\newcommand{\FPspaceN}{\cp{FPSPACE}}
\newcommand{\suffix}{\textsf{suffix}}
\renewcommand\bar[1]{\overline{#1}}
\newcommand{\olivier}[2][]{\SANSCOMMENTAIRE{\todo[inline,color=blue!40,caption={2do}, #1]{\begin{minipage}{\textwidth-4pt}
Olivier:			#2\end{minipage}}}}
\newcommand{\manon}[2][]{\SANSCOMMENTAIRE{\todo[inline,color=violet!20!white,caption={2do},#1]{\begin{minipage}{\textwidth-4pt}
Manon:			#2\end{minipage}}}}
\newcommand{\olivierpourmanon}[2][]{\SANSCOMMENTAIRE{\todo[inline,color=blue!40,caption={2do}, #1]{\begin{minipage}{\textwidth-4pt}
Olivier $\to$ Manon:			#2\end{minipage}}}}
\newcommand{\olivierpourlecteur}[2][]{\SANSCOMMENTAIRE{\todo[inline,color=blue!40,caption={2do}, #1]{\begin{minipage}{\textwidth-4pt}
Olivier $\to$ Arnaud:			#2\end{minipage}}}}
\newcommand{\dderiv}[2]{\frac{\partial #1}{\partial #2}}
\newcommand{\dderivL}[1]{\frac{\partial #1}{\partial \lengt}}
\newcommand{\dderivl}[1]{\frac{\partial #1}{\partial \ell}}
\newcommand{\lengt}{\mathcal{L}}
 \newcommand\lengthnotation{\ell}
\newcommand{\length}[1]{\mathrm{\lengthnotation}(#1)}
\newcommand{\degre}[1]{\mathrm{deg}(#1)}
\newcommand{\derivlength}{\mathbb{DL}}
\newcommand{\linearderivlength}{\mathbb{LDL}}
\newtheorem{theorem}{Theorem}
\newtheorem{proposition}{Proposition}
\newtheorem{lemma}{Lemma}
\newtheorem{definition}{Definition}
\newtheorem{remark}{NB}
\begin{document}

\title{A characterization of polynomial time computable functions from the integers to the reals using discrete ordinary differential equations\thanks{This work has been partially supported by ANR Project $\partial IFFERENCE$.}}
\author{Manon Blanc \\ 
	manon.blanc@lix.polytechnique.fr \\
	École Polytechnique, LIX, Palaiseau, France, ENS Paris-Saclay, Gif-sur-Yvette, France
	 \and Olivier Bournez \\
 		olivier.bournez@lix.polytechnique.fr\\
 		École Polytechnique, LIX, Palaiseau, France}
 	\date{ }
\maketitle


\begin{abstract}
In a recent article, the class of functions from the integers to the integers computable in polynomial time has been characterized using discrete ordinary differential equations (ODE), also known as finite differences. Doing so, we pointed out the 
fundamental role of linear (discrete) ODEs
and classical ODE tools such as changes of variables to capture
computability and complexity measures, or as a tool for programming. 

In this article, we extend the approach to a characterization of functions from the integers to the reals computable in polynomial time in the sense of computable analysis. In particular, we provide a characterization of such functions in terms of the smallest class of functions that contains some basic functions, and that is closed by composition, linear length ODEs, and a natural effective limit schema.
\end{abstract}

\section{Introduction}

Ordinary differential equations is a natural tool for modeling many phenomena in applied sciences, with a very abundant literature (see
e.g. \cite{Arn78,BR89,CL55}) and is rather well
understood under many aspects.  In a series of recent articles, they have been shown to also correspond to some natural computational model, with a nice computability and complexity theory: See  \cite{DBLP:journals/corr/BournezGP16} for a survey.

In a recent article \THEPAPIERS, we investigated their discrete counterpart, that are called discrete ODEs, also known as difference equations.  The basic principle is, for a function $\tu f(x)$  to consider its discrete derivative defined as $\Delta \tu f(x)= \tu f(x+1)-\tu
        f(x)$. We will intentionally  also write $\tu f^\prime(x)$ for
        $\Delta \tu f(x)$ to help to understand
	statements with respect to their classical continuous counterparts. 
This associated derivative notion, called  \emph{finite differences},  has been widely studied in numerical optimization for function approximation \cite{gelfand1963calcul} and  in \emph{discrete calculus} 
\cite{graham1989concrete,gleich2005finite,izadi2009discrete,urldiscretecalculuslau} for  combinatorial analysis.
   While the underlying computational content of finite differences theory is clear and has been pointed out many times, no fundamental connections with algorithms and complexity had been formally established before  \THEPAPIERS, where it was proved that many complexity and computability classes from computation theory can actually be characterized algebraically using discrete ODEs. Even if such results were initially motivated by helping to understand the relationships between analog
computations and classical discrete models of computation theory, the relation between the two is currently unclear.

In the context of algebraic classes of functions, a classical notation is the following: Call \emph{operation}  an operator that takes finitely many functions, and returns some new function defined from them. Then, $[f_{1}, f_{2}, \dots, f_{k}; op_{1}, op_{2},\dots,op_{\ell}]$, denotes the smallest set of functions containing functions $f_{1}, f_{2}, \dots, f_{k}$ that is closed under operations $op_{1}$, $op_{2}$, \dots $op_{\ell}$. 
Call \emph{discrete function} a function of type $ f: S_{1} \times \dots \times S_{d} \to S'_{1} \times \dots S'_{d'}$, where each $S_{i},S'_{i}$ is either $\N$, $\Z$.
Write $\FPtime$ for the class of functions computable in polynomial time. 
A  main result of \THEPAPIERS{} is the following ($\linearderivlength$ stands for linear derivation on length):

\begin{theorem}[\cite{MFCS2019}] \label{th:ptime characterization 2}
For discrete functions, we have 
	$$\linearderivlength= \FPtime$$
	where $\linearderivlength = [\mathbf{0},\mathbf{1},\projection{i}{k}, \length{x}, \plus, \minus, \times, \sign{x} \ ; composition, linear~length~ODE].$
 \end{theorem}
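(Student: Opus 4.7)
The plan is to prove the two inclusions $\linearderivlength \subseteq \FPtime$ and $\FPtime \subseteq \linearderivlength$ separately, as is standard for algebraic characterizations of complexity classes in the spirit of Bellantoni--Cook or Leivant.

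For the inclusion $\linearderivlength \subseteq \FPtime$, I would proceed by structural induction on the definition of $\linearderivlength$. The base functions $\mathbf{0}, \mathbf{1}, \projection{i}{k}, \length{x}, \plus, \minus, \times, \sign{x}$ are all manifestly in $\FPtime$, and composition trivially preserves $\FPtime$ provided we also track a polynomial bound on the bit-length of outputs in terms of inputs. The content is therefore the length-ODE scheme: if $\tu f$ is defined by $\tu f^\prime(x,\vy) = A(x,\vy,\tu f(x,\vy)) \cdot \tu f(x,\vy) + B(x,\vy,\tu f(x,\vy))$ with the derivative taken with respect to $\length{x}$, and with $A,B$ already in $\FPtime$, then the value of $\tu f$ at $x$ is obtained by iterating the linear update only $\length{x} = O(\log x)$ times. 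Two things must be checked: first, that the bit-length of $\tu f(x,\vy)$ stays polynomial — this is precisely where linearity (rather than arbitrary dependence on $\tu f$) is crucial, since a linear recurrence applied $\length{x}$ times multiplies the size by at most a constant factor per step, giving polynomial growth; second, that a step of the iteration is itself polynomial-time, which follows from the inductive hypothesis on $A,B$. Combining these yields a polynomial-time algorithm.

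For the reverse inclusion $\FPtime \subseteq \linearderivlength$, I would simulate a Turing machine $M$ running in time $p(\length{x})$ for some polynomial $p$. The strategy has three stages. First, show that enough discrete-arithmetic toolkit is derivable in $\linearderivlength$: pairing and projection functions on $\N$, conditionals $\cond{}{}{}$ via $\sign{}$, bit extraction, concatenation, bounded sums and products — these are built using composition and a few simple length ODEs, following the standard template of implicit complexity. Second, encode a configuration of $M$ (state, tape, head position) as a single integer so that the one-step transition function $\nextI$ is expressible as a composition of these arithmetic primitives, hence belongs to $\linearderivlength$. Third, express the full computation as the $p(\length{x})$-fold iteration of $\nextI$ starting from the initial configuration. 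Since $p(\length{x})$ is polynomial in $\length{x}$, this iteration can be obtained by combining linear length ODEs: one length ODE gives $\length{x}$ iterations for free, and by nesting/stacking length ODEs with the help of padding tricks (e.g. iterating on inputs of length $\length{x}^{k}$ constructed using $\plus, \times, \length{}$) one reaches $\length{x}^{k}$ iterations. Decoding the final configuration then yields the output.

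The main obstacle is the last step of the second direction: turning the basic $\length{x}$-many iterations that a length ODE provides directly into the $p(\length{x})$-many iterations needed to simulate a polynomial-time machine. The classical move is to perform the simulation not in ``real time'' but in ``length time'' by enlarging inputs: one constructs a polynomially longer auxiliary argument whose length equals the desired iteration count, then iterates once along that length. Making this work inside $\linearderivlength$ requires showing that padding, length manipulation, and the associated bookkeeping are all expressible using only the closure operators allowed, and that every intermediate value stays within the polynomial bit-length bounds imposed by the linearity condition. A secondary technical obstacle is the management of the sizes of $A$ and $B$ in the ODE definitions used to build these primitives, which must be controlled by an invariant that is preserved through composition and length ODE, typically a ``polynomial bound on length'' invariant carried throughout the induction.
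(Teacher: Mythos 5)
Your two-inclusion strategy is essentially the one behind the cited result: the paper itself does not reprove this theorem (it imports it from \cite{MFCS2019}), but the ingredients it restates — the explicit solution formula for linear discrete ODEs (Lemma~\ref{def:solutionexplicitedeuxvariables}), the change of variables turning a length-ODE into an ODE in $t=\length{x}$ (Lemma~\ref{fundob}), and the resulting polynomial-size/polynomial-time bound (Lemma~\ref{fundamencoreg}) — are exactly the crux of your first inclusion, and your converse via a machine simulation whose one-step function is written with $\sign{}$-based conditionals and iterated by running a length-ODE over a padded argument of length $\length{x}^{k}$ (built from functions like $2^{\length{x}\cdot\length{y}}$) is the same mechanism the paper reuses in Sections 4 and 5 for the real-valued extension. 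A minor, inessential divergence: the original proof simulates a counter-style machine over the integers rather than a Turing machine with an explicit tape encoding; your route works equally well and is in fact the one adopted here for the extension to $\R$.

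One step needs repair, though. For $\linearderivlength\subseteq\FPtime$ you justify the size bound by saying a linear step ``multiplies the size by at most a constant factor per step''. Taken literally for bit-length this is not sufficient: a constant multiplicative factor over $\length{x}$ steps gives bit-length of order $c^{\length{x}}\approx x^{\log_2 c}$, which is exponential in the input length, and polynomial-time computability would be lost. The correct mechanism — and the one the cited proof relies on — is that in the scheme the coefficients $\tu A$ and $\tu B$ are \polynomial{} expressions \emph{essentially constant} in $\tu f$, so their values, hence their lengths, are bounded polynomially in $\length{x},\length{\tu y}$ independently of $\tu f$; since $\length{\tu A\cdot \tu f+\tu B}\le \length{\tu A}+\length{\tu f}+\length{\tu B}+O(1)$, each of the $\length{x}$ steps adds only polynomially many bits (equivalently, read the bound off the explicit solution, a sum of at most $\length{x}$ products of at most $\length{x}$ polynomially bounded factors). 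You should also make the essential-constancy requirement on $\tu A,\tu B$ explicit in your statement of the scheme: with unrestricted polynomial dependence on $\tu f$ (e.g.\ iterated squaring) the polynomial length bound fails, so this hypothesis is precisely what your induction invariant must carry.
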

 	That is to say,  $\linearderivlength$ (and hence $\FPtime$ for discrete functions) is the smallest subset of  functions,
	 that contains   the constant functions $\mathbf{0}$ and $\mathbf{1}$, the projections
         $\projection{i}{k}$,  the length function  $\length{x}$ (that maps an integer to the length of its binary representation), 
         the addition function $x \plus y$, the subtraction function $x \minus y$, the multiplication function $x\times y$ (that we will also often denote $x\cdot y$), the sign function $\sign{x}$
	and closed under composition (when defined)  and linear length-ODE
        scheme: The linear length-ODE
        scheme
         basically (a formal definition is provided in  Definition \ref{def:linear lengt ODE})   corresponds in defining functions from linear ODEs with respect to derivation with respect to the length of the argument,
        that is to say of the form $\dderivl{\tu f(x,\tu y)} = 	\tu A [\tu f(x,\tu y), 
	x,\tu y]  \cdot \tu f(x,\tu y) 
           + \tu B [\tu f(x,\tu y), 
	x,\tu y ].
        $
 Here, in the above description, we use the notation $\dderivl{\tu f(x,\tu y)}$, which corresponds in derivation of $\tu f$ along the length function:  Given some function $\lengt:\N^{p+1} \rightarrow \Z$, and in particular for the case of where $\lengt(x,\tu y)=\ell(x)$, 
	\begin{equation}\label{lode}
	\dderivL{\tu f(x,\tu y)}= \dderiv{\tu f(x,\tu y)}{\lengt(x,\tu
          y)} = \tu h(\tu f(x,\tu y),x,\tu y),
	\end{equation}
is a formal synonym for
$ \tu f(x+1,\tu y)= \tu f(x,\tu y) + (\lengt(x+1,\tu y)-\lengt(x,\tu y)) \cdot
\tu h(\tu f(x,\tu y),x,\tu y).$

\begin{remark}
This concepts, introduced in  \THEPAPIERS, is motivated by the fact that the latter expression is similar to
classical formula for classical continuous ODEs:
$$\frac{\delta f(x,\tu y )}{\delta x} = \frac{\delta
  \lengt (x,\tu y) }{\delta x} \cdot \frac{\delta f(x,\tu
  y)}{\delta \lengt(x, \tu y)},$$
  and hence this is similar to a change of variable. Consequently, a linear length-ODE is basically a linear ODE over variable $t$, once the change of variable $t=\ell(x)$ is done. 
\end{remark}

%
%
%

In particular, writing as usual $B^{A}$ for functions from $A$ to $B$, we have: 

 \begin{theorem}[\cite{MFCS2019}] \label{th:ptime characterization 2}
	$\linearderivlength \cap \N^{\N}= \FPtime \cap \N^{\N}.$ 
 \end{theorem}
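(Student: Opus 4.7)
The plan is to observe that this statement is an immediate specialization of the preceding theorem (the one stated for arbitrary discrete functions) to the sub-class of discrete functions whose type happens to be $\N \to \N$. Recall that a discrete function is any function of type $S_{1}\times\dots\times S_{d}\to S'_{1}\times\dots\times S'_{d'}$ with each $S_{i},S'_{i}\in\{\N,\Z\}$; unary functions from $\N$ to $\N$ form a subclass. Intersecting both sides of the equality $\linearderivlength = \FPtime$ (which holds at the level of all discrete functions by Theorem~1 of \cite{MFCS2019}) with $\N^{\N}$ preserves the equality, and no further argument is needed.

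More concretely, for the inclusion $\linearderivlength \cap \N^{\N} \subseteq \FPtime \cap \N^{\N}$, any $f\in\linearderivlength$ whose signature is $\N\to\N$ is in particular a discrete function in $\linearderivlength$, hence polynomial-time computable by the previous theorem, and its domain/codomain are unchanged, so $f\in\FPtime\cap\N^{\N}$. For the converse $\FPtime\cap\N^{\N}\subseteq\linearderivlength\cap\N^{\N}$, any polynomial-time computable $f:\N\to\N$ is a polynomial-time computable discrete function, so by the same theorem $f\in\linearderivlength$, and again its signature $\N\to\N$ is preserved, yielding $f\in\linearderivlength\cap\N^{\N}$.

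There is essentially no obstacle here: the statement is purely a restriction of scope. The only thing worth being careful about is the fact that functions in $\linearderivlength$ of signature $\N\to\N$ may be constructed in the algebra using intermediate functions of larger arity or with values in $\Z$ (through subtraction, the sign operator, or linear length-ODEs on vector-valued tuples); but this is already the whole point of allowing the algebra $\linearderivlength$ to be closed under composition and the length-ODE scheme over the full class of discrete functions, and Theorem~1 already accounts for this. Thus the role of the present statement is not to introduce new proof content but to pin down the unary signature $\N\to\N$ on which the subsequent sections will build the extension to functions $\N\to\R$ of computable analysis.
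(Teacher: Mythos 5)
Your proposal is correct and matches the paper exactly: the paper states this result with the phrase ``In particular'' as an immediate specialization of the preceding theorem $\linearderivlength = \FPtime$ for discrete functions, obtained by restricting to the signature $\N\to\N$, with no additional proof content. Your remark that intermediate functions in the algebra may have larger arity or take values in $\Z$ is a reasonable point of care, and it is indeed already absorbed by the general discrete-function statement being invoked.
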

 
This provides a  characterization of $\FPtime$ for discrete functions that does not require to specify an
explicit bound in the recursion, in contrast to Cobham's work \cite{Cob65}, nor to assign
a specific role or type to variables, in contrast to safe recursion or ramification \cite{bs:impl,Lei-LCC94}. The characterization happens to be very simple 
using only natural notions from the world of ODE.   

Our purpose in this article is to extend this to more general classes of functions.
In particular, 
this makes sense to try to characterize polynomial time functions from the reals to the reals. We consider here computability and complexity over the reals in the most classical sense, that is to say, computable analysis (see e.g. \cite{Wei00}). Indeed, considering that $\N \subset \R$, most of the basic functions and operations in the above characterization (for example, $+$, $-$, \dots) have a clear meaning over the reals. %
One clear difficulty is that discrete ODEs are about discrete schemata, while we would like to talk about functions over the continuum.  We did not succeed to do so yet, but we propose here a substantial step towards this direction:  We provide a characterization of polynomial time computable functions \emph{from the integers to the reals} using discrete linear ODEs: considering linear ODEs is very natural in the context of ODEs.

%
%
%
%
%
%
%
%

To do so, we naturally go to talking  about algebra of functions more general than discrete functions, that is to say over more general space than $\N$ and $\Z$. This  introduces some subtleties, and difficulties, that we discuss in this article, with our various concepts, definitions and statements. 
%
%
 We hence basically  consider in this article functions of type $ f: S_{1} \times \dots \times S_{d} \to S_{0}$, where each $S_{i}$ is either $\N$, $\Z$ or $\Q$ or $\R$. 
 Or possibly vectorial functions whose components are of this type. We denote  $\Lesfonctionsquoi$ for the class of such functions. 
Clearly, we can consider $\N \subset \Z \subset \Q \subset \R$, but as functions may have different type of outputs, composition is an issue. We simply admit that
composition may not be defined in some cases. In other words, we consider that composition is a partial operator: for example, given $f: \N \to \R$ and $g: \R \to \R$, the composition of $g$ and $f$ is defined as expected, but $f$ cannot be composed with a function such as $h: \N \to \N$.


We then consider the class  $$\manonclass = [\mathbf{0},\mathbf{1},\projection{i}{k},   \length{x}, \plus, \minus, \times,\signb{x},\frac{x}{2};{composition, linear~length~ODE}]$$
of functions of $\Lesfonctionsquoi$. 
%
%
Here
\shortermcu{\begin{itemize}
 \item} $\ell: \N \to \N$ is the length function, mapping some integer to the length of its binary representation,  $\frac{x}{2}: \R \to \R$ is the function that divides by $2$, and all other basic functions are defined exactly as for $\linearderivlength$, but considered here as functions from the reals to reals. 
%
%
%
\shortermcu{
\item}  $\signb{x}: \R \to \R$  is some piecewise affine function that 
           takes value $1$ for $x>\frac34$ and $0$ for $x<\frac14$, and continuous piecewise affine: In particular, its restrictions to the integer is the function $\sign{x}$ considered in $\linearderivlength$. 
   %
%
%
\shortermcu{
	 \end{itemize}
	 }
%
%
%
%
We prove the following ($\|.\|$ stands for the sup-norm).

\begin{theorem}[Main theorem $1$] \label{th:main:one}
A function  $\tu f: \N^{d} \to \R^{d'}$ is computable in polynomial time if and only if there exists $\tilde{\tu f}:\N^{d+1} \to \R^{d'} \in \manonclass$ such that
for all $\tu m \in \N^{d}$, $n \in \N$,
$\|\tilde{\tu f}(\tu m,2^{n}) - \tu f(\tu m) \| \le 2^{-n}.$
\end{theorem}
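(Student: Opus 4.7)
My plan is to treat the two implications separately, each time reducing to the discrete characterization $\linearderivlength \cap \N^\N = \FPtime \cap \N^\N$ (Theorem~\ref{th:ptime characterization 2}), with one new ingredient: the implementation of ``division by $2^n$'' inside $\manonclass$ by means of the basic function $x/2$ and a linear length-ODE.

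For the \emph{(only if)} direction, suppose $\tu f : \N^d \to \R^{d'}$ is polynomial-time computable in the sense of computable analysis. Then some polynomial-time Turing machine, given $(\tu m, 1^n)$, outputs an integer $p$ with $|p/2^n - \tu f(\tu m)| \le 2^{-n}$. Viewing the precision parameter as the integer $2^n$ of binary length $n+1$, this machine computes a discrete polynomial-time function $P:\N^{d+1}\to\Z$, so by Theorem~\ref{th:ptime characterization 2}, $P\in\linearderivlength\subseteq\manonclass$. The missing division by $2^n$ is realised inside $\manonclass$ via the linear length-ODE
\[
	\dderivl{g(p,y)} \;=\; -\,\tfrac{g(p,y)}{2},\qquad g(p,1)=p,
\]
whose unfolding halves $g$ exactly once each time $\ell(y)$ increases by $1$, yielding $g(p,2^n)=p/2^n$. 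Setting $\tilde{\tu f}(\tu m, y) := g(P(\tu m, y), y)$ gives a function in $\manonclass$ with $\|\tilde{\tu f}(\tu m,2^n) - \tu f(\tu m)\|\le 2^{-n}$.

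For the \emph{(if)} direction, I would prove by structural induction on $\manonclass$ that every $\tu h \in \manonclass$ is itself polynomial-time computable in the sense of computable analysis, together with a parallel magnitude bound: $|\tu h(\cdot)|$ is dominated by a polynomial in the binary lengths of its integer inputs and in $1+\|\cdot\|_\infty$ of its real inputs. Once this is established, from any $\tilde{\tu f}\in\manonclass$ with $\|\tilde{\tu f}(\tu m,2^n)-\tu f(\tu m)\|\le 2^{-n}$ one recovers $\tu f$ in polynomial time by computing a $2^{-n-1}$-accurate dyadic approximation of $\tilde{\tu f}(\tu m,2^{n+1})$. The basic generators and composition are routine; the linear length-ODE scheme is where all the work lies. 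Its unfolding performs only $O(\ell(x))$ length-increasing updates of the form $\tu f \leftarrow \tu f + A\tu f + B$, so the inductive magnitude bounds on $A$ and $B$ yield an overall bound of order $(1+\mathrm{poly})^{\ell(x)}$, which remains polynomial in $x$.

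The main obstacle is error propagation through this unfolding. A local error introduced at step $k$ can be amplified by subsequent linear updates by a factor of at most $(1+\|A\|)^{\ell(x)-k}$, which by the magnitude bound stays polynomial in $x$. Hence performing each of the $O(\ell(x))$ internal evaluations at precision $2^{-n-q(|\tu m|,n)}$ for a suitable polynomial $q$ is enough, and fits within polynomial time. Crucially, the replacement of the discontinuous $\fonction{sg}$ by the Lipschitz function $\bar{\fonction{cond}}$ ensures that every basic function is Lipschitz on the compact ranges imposed by the magnitude bound, so the linear error analysis is uniform. Combining the magnitude bound with this controlled error analysis produces a polynomial-time computable-analysis algorithm for every $\tu h\in\manonclass$, completing the proof.
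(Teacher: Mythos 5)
Your proposal is correct in outline, and its two halves relate differently to the paper. For the ``if'' direction you do essentially what the paper does: prove by structural induction that every function of $\manonclass$ is polynomial-time computable in the sense of computable analysis (this is Proposition~\ref{prop:mcu:un}, whose hard case, the linear length-ODE scheme, is handled exactly as you sketch: a magnitude bound coming from the explicit solution formula, an unfolding with $O(\length{x})$ updates, and precision propagation using the affine/Lipschitz modulus of $\signb{\cdot}$), and then recover $\tu f$ by approximating $\tilde{\tu f}(\tu m,2^{n+1})$ to $2^{-(n+1)}$ and applying the triangle inequality. For the ``only if'' direction, however, you take a genuinely different route: the paper re-simulates a Turing machine inside $\manonclass$, encoding tapes as reals in radix $4$ over the digits $1,3$, building an essentially linear $\bar{Next}$, iterating it by a linear length-ODE (Proposition~\ref{prop:deux}), and then converting representations with $\Encode$ and $\Decode$ (Lemmas~\ref{lem:codage:manon} and~\ref{lem:manquant}); you instead observe that the dyadic-approximation map $(\tu m,2^{n})\mapsto p\in\Z$ is a discrete $\FPtime$ function, invoke the discrete characterization $\linearderivlength=\FPtime$ of Theorem~\ref{th:ptime characterization 2}, and add a single division-by-$2^{n}$ gadget realized by a halving linear length-ODE. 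This is shorter and buys a clean modular reduction to the earlier result, whereas the paper's construction buys an explicit, self-contained machine simulation over real-coded configurations that it reuses elsewhere (e.g.\ for Proposition~\ref{prop:deux} and the discussion of oracle computations). Three details in your route deserve a sentence each but are not gaps: the inclusion $\linearderivlength\subseteq\manonclass$ holds on integer arguments because $\signb{\cdot}$ restricts to $\sign{\cdot}$ there and every intermediate value of a $\linearderivlength$ derivation of a discrete function is an integer (the paper uses exactly this replacement for $\overline{div_{2}}$ and $\overline{mod_{2}}$ in Lemma~\ref{lem:manquant}); the coefficient $-\tfrac12$ in your ODE is not an integer constant of a \polynomialb{} expression, so it must be fed in through the auxiliary function slot $\tu h$ (e.g.\ the constant $\tfrac12$ obtained from $\mathbf{1}$ and $\frac{x}{2}$), writing the right-hand side as $-\tu h\cdot g$; and the initial condition of the scheme sits at $x=0$ with $\length{0}=0$, so your halving happens $n+1$ times up to $2^{n}$, an offset fixed by initializing with $2p$ or shifting the precision index.
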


\begin{proof}
Assume there exists $\tilde{\tu f}:\N^{d+1} \to \R^{d'} \in \manonclass$ such that
for all $\tu m \in \N^{d}$, $n \in \N$,
$\|\tilde{\tu f}(\tu m,2^{n}) - \tu f(m) \| \le 2^{-n}.$ 
From Proposition  \ref{prop:mcu:un}, we know that $\tilde{\tu f}$ is computable in polynomial time (in the binary length of its arguments).
Then $\tu f(m)$ is computable: indeed, given some integers $\tu m$ and $n$, we can approximate $\tu f(\tu m)$ at precision $2^{-n}$ as follows: Approximate $\tilde{\tu f}(\tu m,2^{n+1})$ at precision $2^{-(n+1)}$ by some rational $q$, and output $q$. We will then have
$$\|q-\tu f(\tu m)\| \le \|q-\tilde{\tu f}(\tu m,2^{n+1}) \| + \|\tilde{\tu f}(\tu m,2^{n+1})-\tu f(\tu m)\| \le 2^{-(n+1)} + 2^{-(n+1)} \le 2^{-n}.$$
All of this is done in polynomial time in $n$ and the size of $\tu m$, and hence we get that $\tu f$ is polynomial time computable from definitions.
\end{proof}
%

From the fact that we have the reverse direction in previous theorem, it is natural to consider the operation that maps $\tilde{\tu f}$ to $\tu f$. Namely, we introduce the operation $\MANONlim$ ($\MANONlim$ stands for Effective Limit):

\begin{definition}[Operation $\MANONlim$] Given $\tilde{\tu f}:\N^{d+1} \to \R^{d'} \in \manonclass$ such that
for all $\tu m \in \N^{d}$, $n \in \N$,
$\|\tilde{\tu f}(\tu m,2^{n}) - \tu f(\tu m) \| \le 2^{-n},$ then 
$\MANONlim(\tilde{\tu f})$ is the (clearly uniquely defined) corresponding function  $\tu f: \N^{d} \to \R^{d'}$.
\end{definition}

%


We obtain our main result, that provides a characterization of polynomial time computable functions for functions from the integers to the reals. 

\begin{theorem}[Main theorem $2$] \label{th:main:two} 
A function $\tu f: \N^{d} \to \R^{d'}$.
is  computable in polynomial time if and only if all its components belong to $\manonclasslim$, where :

$$\manonclasslim= [\mathbf{0},\mathbf{1},\projection{i}{k},   \length{x}, \plus, \minus, \times,  \signb{x}, \frac{x}{2}; {composition, linear~length~ODE};\MANONlim].$$
\end{theorem}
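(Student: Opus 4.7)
The plan is to derive Theorem \ref{th:main:two} from Theorem \ref{th:main:one} combined with a structural induction on the way elements of $\manonclasslim$ are built. The necessity direction (polynomial time computable implies membership in $\manonclasslim$) is immediate: by Theorem \ref{th:main:one} there is some $\tilde{\tu f} \in \manonclass \subseteq \manonclasslim$ with $\|\tilde{\tu f}(\tu m,2^{n}) - \tu f(\tu m)\| \le 2^{-n}$ for all $\tu m \in \N^{d}$ and all $n \in \N$, so $\tu f = \MANONlim(\tilde{\tu f}) \in \manonclasslim$ using the closure of $\manonclasslim$ under $\MANONlim$.

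For the sufficiency direction I would prove, by structural induction on the construction of $h \in \manonclasslim$, that every such $h$ whose type is $\N^{d} \to \R^{d'}$ is polynomial time computable in the sense of computable analysis. The base functions are all polynomial time computable, and composition of polynomial time computable functions is polynomial time computable by a standard modulus-of-continuity argument. The closure under $\MANONlim$ is exactly the content of the forward direction of Theorem \ref{th:main:one}, whose proof is reproduced essentially verbatim. The remaining operation, linear length ODE, unrolls a recurrence of the form $\tu f(x+1,\tu y) = \tu f(x,\tu y) + (\ell(x+1)-\ell(x)) \cdot (\tu A \cdot \tu f(x,\tu y) + \tu B)$ over $O(\ell(x))$ steps, and this is where the core of the argument lies.

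The main obstacle is precisely this last case: unlike in the setting of \THEPAPIERS, the coefficient functions $\tu A$ and $\tu B$ may now be genuinely real-valued and only approximable to finite precision, since they may involve subcomputations produced by $\MANONlim$, so numerical errors may compound along the iteration. To handle this I would strengthen the induction hypothesis so that each $h \in \manonclasslim$ comes equipped with a $\bar{\fonction{cond}}$-polynomial bound on $\|h(\tu x)\|$ together with a polynomial-time modulus of continuity, mirroring the quantitative invariants used in Proposition \ref{prop:mcu:un}. Then, since each step of the recurrence multiplies the running value by a polynomially bounded factor and adds a polynomially bounded quantity, reaching a target precision $2^{-n}$ only requires computing each intermediate coefficient at precision $2^{-q(n,\ell(x),\ldots)}$ for some polynomial $q$, which by the inductive hypothesis is itself achievable in polynomial time. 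Combining these quantitative estimates along the full trajectory of length $O(\ell(x))$ yields the desired polynomial time computability of $\tu f: \N^{d} \to \R^{d'}$.
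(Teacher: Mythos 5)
Your proposal is correct and follows essentially the same route as the paper: the necessity direction is Theorem \ref{th:main:one} plus closure under $\MANONlim$, and the sufficiency direction is a structural induction in which the $\MANONlim$ case is exactly the approximation argument from the forward direction of Theorem \ref{th:main:one}. The only difference is that you re-derive the linear-length-ODE case via a strengthened induction hypothesis (value bounds plus moduli of continuity), whereas the paper notes that no new work is needed there: Lemma \ref{lem:un} is a closure property of the class of polynomial-time computable functions itself (in the sense of computable analysis), so it applies verbatim even when $\tu g$ and $\tu h$ involve $\MANONlim$, leaving the $\MANONlim$ case as the only genuinely new step.
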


In particular:

\begin{theorem}[Main theorem $2$]\label{th:main:twop} 
$\manonclasslim \cap \R^{\N} = \FPtime \cap \R^{\N} $
\end{theorem}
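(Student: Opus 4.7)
The plan is to derive this statement as the scalar special case of Theorem \ref{th:main:two}, specialized to $d=d'=1$; however, both inclusions deserve a separate treatment, since one direction relies on Theorem \ref{th:main:one} and the other requires a structural induction over $\manonclasslim$.

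For the inclusion $\FPtime \cap \R^{\N} \subseteq \manonclasslim$, I would take $f:\N \to \R$ polynomial-time computable in the sense of computable analysis. By Theorem \ref{th:main:one}, there is some $\tilde f:\N^{2} \to \R$ belonging to $\manonclass$ such that $\|\tilde f(m,2^{n}) - f(m)\| \le 2^{-n}$ for all $m,n\in\N$. By definition of the $\MANONlim$ operator, $f = \MANONlim(\tilde f)$, and since $\manonclasslim$ is closed under $\MANONlim$ and already contains $\tilde f$ (being a superset of $\manonclass$), it follows that $f \in \manonclasslim$.

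For the reverse inclusion $\manonclasslim \cap \R^{\N} \subseteq \FPtime$, I would proceed by structural induction on the derivation of a function $f \in \manonclasslim$. The base cases are immediate: the constants $\mathbf{0},\mathbf{1}$, the projections $\projection{i}{k}$, the length function $\length{x}$, the arithmetic operations $\plus,\minus,\times$, the function $x/2$, and the piecewise affine $\signb{x}$ are all polynomial-time computable in the computable analysis sense. For the composition step, one uses the standard fact that composition of polynomial-time computable real functions (with compatible types) remains polynomial-time computable, provided one composes precisions suitably. For the linear length ODE step, one invokes the real-valued analog of the corresponding result from \THEPAPIERS: because the scheme is linear and the number of update steps is $\ell(x) = O(\log x)$, a straightforward induction on $\ell(x)$ shows that the solution stays polynomially bounded and can be approximated to precision $2^{-n}$ in time polynomial in $n$ and in the lengths of the inputs. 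Finally, the $\MANONlim$ step is handled exactly by the computation done in the proof of Theorem \ref{th:main:one}: from a polynomial-time approximation $\tilde f(\tu m,2^{n+1})$ at precision $2^{-(n+1)}$ one obtains a $2^{-n}$-approximation of $f(\tu m)$ via the triangle inequality, in time polynomial in $n$ and the size of $\tu m$.

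The main obstacle is the linear length ODE case over the reals. Unlike the purely integer setting of Theorem \ref{th:ptime characterization 2}, here one must control the propagation of approximation errors across the $\ell(x)$ Euler-style update steps while also keeping the magnitude of intermediate values under polynomial control. The linearity of the scheme is crucial: it guarantees that errors at each step are amplified by a factor that depends polynomially on the current value, so that the total error after $\ell(x)$ steps stays singly exponential in $\ell(x)$, i.e.\ polynomial in $x$. One must therefore choose working precision slightly larger than the target precision, by an additive term polynomial in $\ell(x)$ and in the sizes of the other inputs, mirroring the treatment in Proposition \ref{prop:mcu:un}. Once this bookkeeping is in place the induction closes, and combining both inclusions yields $\manonclasslim \cap \R^{\N} = \FPtime \cap \R^{\N}$.
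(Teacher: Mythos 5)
Your proposal is correct and follows essentially the same route as the paper: the statement is the $d=d'=1$ case of Theorem \ref{th:main:two}, whose direct direction is immediate from Theorem \ref{th:main:one} (via one application of $\MANONlim$), and whose reverse direction is an induction in which the only new closure to check is $\MANONlim$, handled by exactly your triangle-inequality approximation argument. The base functions, composition and linear length ODE cases you re-sketch are already covered by Proposition \ref{prop:mcu:un} (with Lemma \ref{lem:un} doing the error/precision bookkeeping you describe), so no new argument is needed there.
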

	 
	 \shortermcu{
The rest of the paper is organized as follows:} In Section \ref{sec:discreteode}, we recall \shorter{some basic statements from }the theory of discrete ODEs. In Section \ref{sec:defanalysecalculable}, we recall  required concepts from computable analysis. In Section \ref{manonclassdansfptime}, we prove that functions from $\manonclass$ are polynomial time computable. 
Section \ref{fptimedansmanonclass} is proving a kind of reverse implication for functions over words. Then this is extended in Section \ref{sec:computablereal} to functions from integers to the reals, and we obtain a proof of Theorem \ref{th:main:one}. Section \ref{sec:main:two} then proves Theorems \ref{th:main:two} and \ref{th:main:twop} . Section \ref{sec:generalizations} is some generalizations of these results. Section \ref{sec:conclusion} discusses future work and difficulties to go to functions of $\R^{\R}$.


\paragraph{Related work.}  Various computability and complexity classes have been recently characterized using (classical) continuous ODEs: 
The most up-to-date survey is \cite{DBLP:journals/corr/BournezGP16}. Dealing with discrete ODEs is really different, as most of the constructions heavily rely on some closure properties of continuous ODEs  not true for discrete ODEs, in  particular because there is no chain rule formula for discrete derivation. 
The idea of considering discrete ODE as a model of computation is due to \THEPAPIERS. 
In a non-ODE centric point of view, we are characterizing some complexity classes using particular discrete schemata. Recursion schemes constitutes a major approach of computability theory and to some extent of complexity theory. 
The foundational characterization of $\FPtime$ due to Cobham \cite{Cob65}, and then others based on safe recursion \cite{bs:impl} or
ramification (\cite{LM93,Lei94}), or for other classes \cite{lm:pspace}, gave birth to the very vivid field of \textit{implicit complexity} at the
interplay of logic and theory of programming: See 
\cite{Clo95,clote2013boolean} for monographs.
Our ways of simulating Turing machines have some reminiscence of similar constructions used in other contexts such as Neural Networks \cite{SS95,LivreSiegelmann}. But with respect to all previous contexts, as far as we know, only a few papers have been devoted to characterizations of complexity, and even computability, classes in the sense of computable analysis. There have been some attempts using continuous ODEs \cite{BCGH07}, or the so-called $\R$-recursive functions \cite{DBLP:journals/corr/BournezGP16}. For discrete schemata, we only know \cite{brattka1996recursive}  and  \cite{ng2021recursion}, focusing on computability and not complexity.

%

\section{Some concepts from the theory of discrete ODEs}
\label{sec:discreteode}
	
In this section, we  recall some concepts and definitions from discrete ODEs, either well-known or established in \THEPAPIERS. 
%
%
\newcommand\polynomial{ \fonction{sg}-polynomial}
%
%
%
%
%
We  need to slightly extend the concept of \polynomial{}  expression from \THEPAPIERS{} to allow expressions with $\signb$ instead of $\sign$. 

\olivier{Peut-être besoin d'être plus fin sur la forme de ce qu'on autorise dans la définition d'après pour que ca marche parfaitement/au plus simple}
\begin{definition}[Extension of \THEPAPIERS]
A \polynomialb{}  expression $P(x_1,...,x_h)$ is an expression built-on
$+,-,\times$ (often denoted $\cdot$) and $\signb{}$ functions over a set of variables $V=\{x_1,...,x_h\}$ and integer constants.
The degree $\deg(x,P)$ of a term $x\in V$ in $P$ is defined inductively as follows:
\shortermcu{
\begin{itemize}
	\item} $\deg(x,x)=1$ and for  $x'\in V\cup \Z$ such that $x'\neq x$, $\deg(x,x')=0$;
	\shortermcu{\item}  $\deg(x,P+Q)=\max \{\deg(x,P),\deg(x,Q)\}$;
\shortermcu{\item}   $\deg(x,P\times Q)=\deg(x,P)+\deg(x,Q)$;
\shortermcu{\item}   $\deg(x,\sign{P})=0$.
\shortermcu{
\end{itemize}}
A \polynomialb{}  expression $P$  is \textit{essentially constant} in
$x$ if $\degre{x,P}=0$. 
\end{definition}

Compared to the classical notion of degree in polynomial expression,
all subterms that are within the scope of a sign (that is to say $\signb{}$)  function contributes
 $0$ to the degree. A vectorial function (resp. a matrix or a vector) is said to be a \polynomialb{} expression if all
its coordinates (resp. coefficients) are. 
It is said to be
\textit{essentially constant} if all its coefficients are.

\begin{definition}[\THEPAPIERS] \label{def:essentiallylinear}
A 
\polynomialb{} expression $\tu g(\tu f(x, \tu y), x,
\tu y)$ is \textit{essentially linear} in $\tu f(x, \tu y)$ if
it is of the form $
	\tu g(\tu f(x, \tu y), x,
	\tu y) =
	\tu A [\tu f(x,\tu y), 
	x,\tu y]  \cdot \tu f(x,\tu y) 
           + \tu B [\tu f(x,\tu y), 
	x,\tu y ] $
 where $\tu A$ and $\tu B$ are \polynomialb{} expressions essentially
constant in $\tu f(x, \tu y)$.
\end{definition}

For example, 
    the expression $P(x,y,z)=x\cdot \signb{(x^2-z)\cdot y} + y^3$
    is essentially linear in $x$, essentially constant in $z$ and not linear in
    $y$. 
    \shortermcu{
     \item 
      The expression
    $P(x,2^{\length{y}},z)=\signb{x^2 - z}\cdot z^2 + 2^{\length{y}}$
    is essentially constant in $x$, essentially linear in
    $2^{\length{y}}$ (but not essentially constant) and not
    essentially linear in $z$. 
     \item }
      The expression:
%
    $  z +
    (1-\signb{x})\cdot (1-\signb{-x})\cdot (y-z) $
    is essentially constant in $x$ and linear in $y$ and $z$.
%

\begin{definition}[Linear length ODE \THEPAPIERS]\label{def:linear lengt ODE}
Function $\tu f$ is linear $\lengt$-ODE definable (from $\tu u$, 
$\tu g$ and $\tu h$) if it corresponds to the
solution of 
%
\begin{equation} \label{SPLode}
f(0,\tu y) 
= \tu g(\tu y)   \quad  and \quad
\dderivl{\tu f(x,\tu y)} 
=   \tu u(\tu f(x,\tu y), \tu h(x,\tu y),
x,\tu y) 
\end{equation}
\noindent where $\tu u$ is \textit{essentially linear} in $\tu f(x, \tu y)$. 
\end{definition}


\section{Concepts from computable analysis}
\label{sec:defanalysecalculable}


When we say that a function $f: S_{1}  \times \dots \times S_{d} \to \R^{d'}$ is (respectively: polynomial-time) computable this will always be in the sense of computable analysis. We recall here the basic concepts and definitions, mostly following the book \cite{Ko91}, whose subject is complexity theory in computable analysis. Alternative presentations include \cite{brattka2008tutorial,Wei00}. 
Actually, as we want to talk about functions in $\Lesfonctionsquoi$, we need to mix complexity issues dealing with integer and real arguments.

\shortermcu{
\begin{remark} One difficulty is that \cite{Ko91} does not formalize all the statements for general functions of $\Lesfonctionsquoi$, and actually almost always restricts to functions over compact domains, and hence we cannot always refer to statements fully formalized in this book, and this why we formulate here explicitly some of the statements and definitions.
\end{remark}
}


\shortermcu{
\subsection{On computable analysis: Computability}}

%
%



A dyadic  number $d$ is a rational number with  a finite binary expansion. That is to say $d=m / 2^{n}$ for some integers $m \in \Z$, $n\in \N$, $n \geq 0$. Let $\dyadic$ be the set of all dyadic rational numbers. We denote by $\dyadic_{n}$ the set of all dyadic rationals $d$ with a representation $s$ of precision $\operatorname{prec}(s)=n$; that is, $\dyadic_{n}=\left\{m \cdot 2^{-n} \mid m \in \Z\right\}$.

\begin{definition}[\cite{Ko91}]  \label{def:cinq} For each real number $x$, a function $\phi: \N \rightarrow \dyadic$ is said to binary converge to $x$ if  for all $n \in \N, \operatorname{prec}(\phi(n))=n$ and $|\phi(n)-x| \leq 2^{-n}$. Let $C F_{x}$ (Cauchy function) denote the set of all functions binary converging to $x$.
\end{definition}

%
%
%
%
%

Intuitively Turing machine $M$ computes a real function $f$ in the following way: 1. The input $x$ to $f$, represented by some $\phi \in C F_{x}$, is given to $M$ as an oracle; 2. The output precision $2^{-n}$ is given in the form of integer $n$  as the input to $M$; 3. The computation of $M$ usually takes two steps, though sometimes these two steps may be repeated for an indefinite number of times:
4. $M$ computes, from the output precision $2^{-n}$, the required input precision $2^{-m}$; 5. $M$ queries the oracle to get $\phi(m)$, such that $\|\phi(m)-x\| \leq 2^{-m}$, and computes from $\phi(m)$ an output $d \in \dyadic$ with $\|d-f(x)\| \leq$ $2^{-n}$.
%
More formally:

\begin{definition}[\cite{Ko91}] A real function $f: \R \rightarrow \R$ is computable if there is a function-oracle {TM} $M$ such that for each $x \in \R$ and each $\phi \in C F_{x}$, the function $\psi$ computed by $M$ with oracle $\phi$ (i.e., $\left.\psi(n)=M^{\phi}(n)\right)$ is in $C F_{f(x)}$. 
\shortermcu{We say the function $f$ is computable on interval $[a, b]$ if the above condition holds for all $x \in[a, b]$.}
\end{definition}

\shortermcu{
\begin{remark}
Given some $x \in \R$, such  an oracle TM $M$ can determine some integer $X$ such that $x \in [-2^{X},2^{X}]$.
\end{remark}
}

\olivier{Probablement superflus pour l'instant:
The following concept plays a very important role:

\begin{definition} \label{def:above}
Let $f:[a, b] \rightarrow \R$ be a continuous function on $[a, b]$. Then, a function $m: \N \rightarrow \N$ is said to be a modulus function of $f$ on $[a, b]$ if for all $n \in \N$ and all $x, y \in[a, b]$, we have
$$
|x-y| \leq 2^{-m(n)} \Rightarrow|f(x)-f(y)| \leq 2^{-n}
$$
\end{definition}

The following is well known (see e.g. \cite{Ko91} for a proof):

\begin{theorem}
A function $f: \R \rightarrow \R$ is computable iff there exist two recursive functions $m: \N \times \N \rightarrow \N$ and $\psi: \dyadic \times \N \rightarrow \dyadic$ such that
\begin{enumerate}
\item for all $k, n \in \N$ and all $x, y \in[-k, k],|x-y| \leq 2^{-m(k, n)}$ implies $|f(x)-f(y)| \leq 2^{-n}$, and
\item  for all $d \in \dyadic$ and all $n \in \N,|\psi(d, n)-f(d)| \leq 2^{-n}$.
\end{enumerate}
\end{theorem}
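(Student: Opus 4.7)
The plan is to prove the two directions separately, with the reverse implication being essentially a straightforward construction and the forward implication requiring some care, especially for the recursive modulus.

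For the reverse direction (existence of $m,\psi$ implies $f$ computable), I would design a function-oracle TM $M$ as follows. Given an input precision $n$ and an oracle $\phi \in CF_{x}$, first query $\phi(0)$ to obtain a dyadic approximation of $x$ within $1$, from which one reads off an integer $k$ with $x \in [-k,k]$. Then compute $m(k,n{+}1)$, query the oracle to get $d := \phi(m(k,n{+}1)) \in \dyadic$ with $|d-x| \le 2^{-m(k,n{+}1)}$, whence $|f(d)-f(x)| \le 2^{-(n+1)}$ by property (1). Next compute $\psi(d,n{+}1)$, which satisfies $|\psi(d,n{+}1)-f(d)| \le 2^{-(n+1)}$ by property (2). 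A triangle inequality gives distance $\le 2^{-n}$ from $f(x)$; rounding to precision $n$ yields an element of $CF_{f(x)}$. Recursiveness of $m$ and $\psi$ ensures $M$ is a valid oracle TM.

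For the forward direction, the construction of $\psi$ is easy: for each $d \in \dyadic$, define a canonical oracle $\phi_d \in CF_d$ by letting $\phi_d(i)$ be the padding of $d$ to precision $i$ (extending the binary expansion by zeros when needed), and set $\psi(d,n) := M^{\phi_d}(n)$. Since $d$ itself is a finite object and $\phi_d$ can be produced uniformly in $d$, $\psi$ is recursive and satisfies condition (2) by the definition of $M$ computing $f$.

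The technical heart of the proof is constructing a recursive modulus $m$. For each fixed $k$ and $n$, and each $x \in [-k,k]$, the computation of $M^{\phi}(n)$ for any $\phi \in CF_{x}$ halts in finite time and therefore inspects only finitely many oracle values, up to some maximal query precision $p(x,n)$. Any two oracles $\phi,\phi' \in CF_{x}$ agreeing on queries at precision $p(x,n)$ force $M$ to produce identical outputs; moreover, if $|y-x|$ is small enough, one can choose oracles for $x$ and $y$ that coincide on the queried dyadics, so $|f(x)-f(y)|$ is controlled by $2\cdot 2^{-n}$. A compactness argument on $[-k,k]$ gives a uniform bound. To make $m$ recursive, I would run the following effective search: for increasing $q = 0,1,2,\dots$, take the finite dyadic $2^{-q}$-net $N_q$ of $[-k,k]$ and simulate $M^{\phi_d}(n{+}1)$ for each $d \in N_q$; accept $q$ as $m(k,n)$ once (i) every such simulation queries its oracle only at precisions $\le q$ and (ii) outputs of adjacent net points lie within $2 \cdot 2^{-(n+1)}$. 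Termination of this search is guaranteed by the existential modulus argument above.

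The main obstacle I expect is exactly the termination and correctness of this effective search: one must argue that the stated acceptance test does hold for some finite $q$ (using compactness and the halting-query-depth property) and, conversely, that any $q$ passing the test is indeed a valid modulus at level $n{+}1$, so that after a trivial shift it yields condition (1) at level $n$. Once this is established, recursiveness of $m$ follows because each test in the search is a finite, decidable simulation of $M$ on finitely many dyadic oracles.
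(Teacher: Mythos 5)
The paper itself gives no proof of this statement: it is quoted as a well-known fact with a pointer to \cite{Ko91}, so there is no in-paper argument to set yours against; what you have written is essentially the textbook proof from Ko's book. Your outline is correct: the reverse direction (compose the modulus with the dyadic evaluator, apply the triangle inequality, round), the definition of $\psi$ through canonical dyadic oracles, and the recursive modulus obtained by an effective search over dyadic nets whose termination is guaranteed by a non-effective compactness argument are all the right moves, and the search-based test you propose does terminate and is correct after the shifts you anticipate.

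Three places need more care in a full write-up. (a) The ``compactness argument on $[-k,k]$'' is really K\H{o}nig's lemma applied to the finitely branching tree of oracle-answer prefixes consistent with some $x\in[-k,k]$: the maximal query precision depends on the oracle $\phi$ and not only on $x$, so finiteness of the supremum over $\phi\in CF_{x}$ must itself be argued this way (each node has finitely many children because admissible answers at precision $i$ lie in the finite set $\dyadic_{i}\cap[-k-1,k+1]$) before any uniformity over $x$ is extracted. (b) Several precision constants are off by one: $\psi(d,n{+}1)$ is only within $2^{-n}$ of $f(x)$ before the final rounding, so rounding to precision $n$ can push the error past $2^{-n}$ (use $n{+}2$); similarly an accepted $q$ yields a modulus at level $n-1$ rather than $n$, so the test must be run at a higher level --- you flag this as a ``trivial shift'', which is fair. (c) For termination of the test, the a priori bound $|M^{\phi_d}(n{+}1)-M^{\phi_{d'}}(n{+}1)|\le 2\cdot 2^{-(n+1)}+|f(d)-f(d')|$ exceeds your acceptance threshold whenever $f(d)\neq f(d')$; you need the extra observation that both outputs lie in $\dyadic_{n+1}$, so their difference is a multiple of $2^{-(n+1)}$ and hence drops to $\le 2\cdot 2^{-(n+1)}$ as soon as $|f(d)-f(d')|<2^{-(n+1)}$, which uniform continuity supplies for $q$ large. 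None of these is a gap in the approach; they are exactly the points where the details must be pinned down.
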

}

\shortermcu{
\subsection{On computable analysis: Complexity}
}

%
%
%
Assume that $M$ is an oracle machine which computes $f$ on do$\operatorname{main} G$. For any oracle $\phi \in C F_{x}$, with $x \in G$, let $T_{M}(\phi, n)$ be the number of steps for $M$ to halt on input $n$ with oracle $\phi$, and $T_{M}^{\prime}(x, n)=\max \left\{T_{M}(\phi, n) \mid \phi \in C F_{x}\right\}$. The time complexity of $f$ is defined as follows.

\begin{definition}[\cite{Ko91}]
Let $G$ be  bounded closed interval $[a, b]$. Let $f: G \rightarrow \R$ be a computable function. Then, we say that the time complexity of $f$ on $G$ is bounded by a function $t: G \times \N \rightarrow \N$ if there exists an oracle TM $M$ which computes $f$ 
such that for all $x \in G$ and all $n>0$, $T_{M}^{\prime}(x, n) \leq t(x, n)$.
\end{definition}

In other words, the idea is to measure the time complexity of a real function based on two parameters: input real number $x$ and output precision $2^{-n}$. Sometimes, it  is more convenient to simplify the complexity measure to be based on only one parameter, the output precision. For this purpose, we say the uniform time complexity of $f$ on $G$ is bounded by a function $t^{\prime}: \N \rightarrow \N$ if the time complexity of $f$ on $G$ is bounded by a function $t: G \times \N \rightarrow \N$ with the property that for all $x \in G$, $t(x, n) \leq t^{\prime}(n)$.

{
However, if we do so, it is important to realize that if we would have taken $G=\R$ in previous definition, for unbounded functions $f$, the uniform time complexity does not exist, because the number of moves required to write down the integral part of $f(x)$ grows as $x$ approaches $+\infty$ or $-\infty$. Therefore, the approach of \cite{Ko91} is to do as follows (The bounds $-2^{X}$ and $2^{X}$ are somewhat arbitrary, but are  chosen here  because the binary expansion of any $x \in\left(-2^{n}, 2^{n}\right)$ has $n$ bits in the integral part).
%
%
\begin{definition}[Adapted from \cite{Ko91}]  For functions $f(x)$ whose domain is $\R$, 
 we say that the (non-uniform) time complexity of $f$ is bounded by a function $t^{\prime}: \N^{2} \rightarrow \N$ if the time complexity of $f$ on $\left[-2^{X}, 2^{X}\right]$ is bounded by a function $t: \N^{2} \rightarrow \N$ such that $t(x, n) \leq t^{\prime}(X, n)$ for all $x \in\left[-2^{X}, 2^{X}\right]$. 
\end{definition}

\olivier{probablement superflus car on parle pas d'espace:
The space complexity of a real function is defined in a similar way. We say the space complexity of $f: G \rightarrow \R$ is bounded by a function $s: G \times \N \rightarrow \N$ if there is an oracle TM $M$ which computes $f$ such that for any input $n$ and any oracle $\phi \in C F_{x}, M^{\phi}(n)$ uses $\leq s(x, n)$ cells, and the uniform space complexity of $f$ is bounded by $s^{\prime}: \N \rightarrow \N$ if for all $x \in G$ and all $\phi \in C F_{x}, M^{\phi}(n)$ uses $\leq s^{\prime}(n)$ cells.
}

%

As we want to talk about general functions in $\Lesfonctionsquoi$, we extend the approach to more general functions.  
%
%
(for conciseness, when $\tu x=(x_{1},\dots,x_{p})$, $\tu X= (X_{1},\dots, X_{p})$, we write
$\tu x \in [-2^{\tu X}, 2^{\tu X}]$ as a shortcut for $x_{1} \in\left[-2^{X_{1}}, 2^{X_{1}}\right]$,  \dots, $x_{p} \in\left[-2^{X_{p}}, 2^{X_{p}}\right]$). 

\olivier{Pas encore convaincu que c'est propre ce truc.}

\begin{definition}[Complexity for real functions: general case]   \label{def:bonendroit} Consider a function  $f(x_{1},\dots,x_{p},n_{1},\dots,n_{q})$ whose domain is $\R^{p} \times \N^{q}$. 
 We say that the (non-uniform) time complexity of $f$ is bounded by a function $t^{\prime}: \N^{p+q+1} \rightarrow \N$ if the time complexity of $f(\cdot,\dots,\cdot,\ell(n_{1}),\dots,\ell(n_{q}))$ on $\left[-2^{X_{1}}, 2^{X_{1}}\right] \times \dots \left[-2^{X_{p}}, 2^{X_{p}}\right] $  
 is bounded by a function $t(\cdot,\dots,\cdot,\ell(n_{1}),\dots,\ell(n_{q}),\cdot): \N^{p} \times \N \to \N$ such that 
$ t(\tu x,\ell(n_{1}),\dots,\ell(n_{q}), n) \leq t^{\prime}(\tu X,\ell(n_{1}), \dots,\ell(n_{q}), n)$
 whenever  $\tu x \in \left[-2^{\tu X}, 2^{\tu X}\right].$
 We say that $f$ is polynomial time computable if $t^{\prime}$ can be chosen as a polynomial. 
 We say that a vectorial function is polynomial time computable iff all its components are. 
 \end{definition}
 
 \shortermcu{
 \begin{remark}
There is some important {subtlety}: When considering $f: \N \to \Q$, as $\Q \subset \R$, stating $f$ is computable may mean two things: in the classical sense, given integer $y$,  i.e. one can compute $p_y$ and $q_{y}$ some integers such that $f(y)=p_{y}/q_{y}$, or that it is computable in the sense of computable analysis: given some precision $n$,  given arbitrary $y$, and $n$ we can provide some rational (or even dyadic) $q_{n}$ such that $|q_{n}-f(y)| \leq 2^{-n}$. As we said, we always consider the latter.
\end{remark}
}

 We do so that this measures of complexity extends the usual complexity for functions over the integers, where complexity of integers is measured with respects of their lengths, and over the reals, where complexity is measured with respect to their approximation.
 %
 In particular, in the specific case of a function $f: \N^{d} \to \R^{d'}$, that basically means there is some polynomial $t': \N^{d+1} \to \N$ so that the time complexity of producing some dyadic approximating $f(\tu m)$ at precision $2^{-n}$ is bounded by $t'(\ell(m_{1}),\dots,\ell(m_{d}),n)$. 


\olivier{On a besoin de ca? subsection{Some facts from computable analysis} En tous les cas, ca parle de $P_{C}[a, b]$ qu'on a pas introduit. Peut etre faire disparaitre

\begin{theorem}[Alternative characterization \cite{Ko91}]  A function $f$ is in $P_{C}[a, b]$ iff there exist polynomial functions $m$ and $q$ and a function $\psi:(\dyadic \cap[a, b]) \times \N \rightarrow \dyadic$ such that
\begin{enumerate}
\item $m$ is a modulus function for $f$ on $[a, b]$,
\item for any $d \in \dyadic \cap[a, b]$ and all $n \in \N,|\psi(d, n)-f(d)| \leq 2^{-n}$, and
\item $\psi(d, n)$ is computable in time $q(\ell(d)+n)$.
\end{enumerate}
\end{theorem}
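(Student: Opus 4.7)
The plan is to prove the equivalence in Ko's style by reducing in both directions between an oracle TM computing $f$ and the pair $(m,\psi)$ of polynomial-time dyadic approximations with a polynomial modulus.

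For the forward direction, I would assume $f\in P_{C}[a,b]$ and let $M$ be an oracle TM witnessing this, with uniform polynomial running time $t'(n)$ (uniformity being legitimate here because $[a,b]$ is a bounded closed interval). To extract the modulus $m$, I would set $m(n)=t'(n+2)+1$ and, given $x,y\in[a,b]$ with $|x-y|\le 2^{-m(n)}$, build two Cauchy functions $\phi_{x}\in CF_{x}$ and $\phi_{y}\in CF_{y}$ that agree on all arguments $k\le t'(n+2)$: at those low precisions both return the same dyadic approximation obtained from a common midpoint (which lies within $2^{-k}$ of both $x$ and $y$), and at higher precisions each is completed in its own way to converge to $x$ respectively $y$. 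Since $M^{\phi_{x}}(n+2)$ and $M^{\phi_{y}}(n+2)$ make identical queries and receive identical answers within their time budget, they produce the same dyadic output, so by the triangle inequality $|f(x)-f(y)|\le 2\cdot 2^{-(n+2)}\le 2^{-n}$. To extract $\psi$, given $d\in\dyadic\cap[a,b]$ I would feed $M$ the canonical oracle $\phi_{d}$ returning $d$ padded (or appropriately rounded) to precision $k$; this $\phi_{d}$ is clearly in $CF_{d}$ and is producible in time linear in $\ell(d)+k$, so $\psi(d,n):=M^{\phi_{d}}(n)$ runs in time polynomial in $\ell(d)+n$ and satisfies $|\psi(d,n)-f(d)|\le 2^{-n}$.

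For the backward direction, given polynomial $m$, $q$ and $\psi$ as in the hypothesis, I would design an oracle TM $N$ as follows: on input $n$ with oracle $\phi\in CF_{x}$, first compute $k=m(n+1)$, then query $d:=\phi(k)$ (a dyadic of precision $k$ with $|d-x|\le 2^{-k}$, hence $\ell(d)=O(k)$ since $x\in[a,b]$ is bounded and $d$ stays in a bounded range), then compute and output $\psi(d,n+1)$. The modulus property gives $|f(d)-f(x)|\le 2^{-(n+1)}$ and the hypothesis on $\psi$ gives $|\psi(d,n+1)-f(d)|\le 2^{-(n+1)}$, so by triangle inequality $|N^{\phi}(n)-f(x)|\le 2^{-n}$. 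The total running time is bounded by the cost of the query plus $q(\ell(d)+n+1)$, which is polynomial in $n$.

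The main obstacle will be the modulus extraction in the forward direction: one must produce two \emph{distinct} Cauchy functions that are consistent with the distinct limits $x$ and $y$ while remaining indistinguishable to $M$ within its query horizon. The delicate point is that $\phi_{x}$ and $\phi_{y}$ must genuinely satisfy the $\operatorname{prec}$ and convergence requirements of Definition \ref{def:cinq} at every precision, not only at low precisions where they coincide, yet differ exactly where the machine never looks. Establishing that such a pair exists whenever $|x-y|\le 2^{-m(n)}$ is the only genuinely subtle step; the remaining arguments are straightforward polynomial-time bookkeeping and applications of the triangle inequality. A minor technical nuisance is ensuring that the dyadic output of $N$ has exactly precision $n$ (rather than $n+1$), which is handled by a final rounding step whose error is absorbed by tightening $n+1$ to $n+2$ if desired.
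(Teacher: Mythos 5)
Your argument is correct and is essentially the standard proof from Ko's book, which is all the paper offers here: the statement is given only with a citation to \cite{Ko91} (in fact inside a commented-out block) and no proof of its own, and your adversary construction of two Cauchy functions that share a common midpoint-rounded approximation at every precision below the machine's query horizon is exactly the classical way the polynomial modulus is extracted, while the two triangle-inequality directions match the standard bookkeeping. The only point worth tightening is in the backward direction, where $d=\phi(k)$ may fall slightly outside $[a,b]$ (e.g.\ when $x$ is an endpoint), so that $\psi(d,n+1)$ is a priori undefined; one must first clip $d$ to a precision-$k$ dyadic inside $[a,b]$, which still satisfies $|d-x|\le 2^{-k}$ and costs nothing.
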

}

\olivier{New. Est-ce clair. Est-ce une terminologie élégante. Mieux?}
In other words, when considering that a function is polynomial time computable, it is in the length of all its integer arguments, as this is the usual convention. However, we need sometimes to consider also polynomial dependency directly
	in one of some specific integer argument, say $n_{i}$,  and not on its length $\ell(n_{i})$. We say that \emph{the  function is polynomial time computable, \unaire{n_{i}}} when this holds (keeping possible other integer arguments $n_{j}$, $j \neq i$, measured by their length). 

A well-known observation is the following.

\begin{theorem} Consider $\tu f$ as in Definition \eqref{def:bonendroit} computable in polynomial time. Then $\tu f$ has a polynomial modulus function of continuity, that is to say there is a polynomial function $m_{\tu f}: \N^{p+q+1}\rightarrow \N$ such that for all $\tu x,\tu y$ and all $n>0$, $\|\tu x-\tu y\| \leq 2^{-m_{\tu f}(\tu X,\ell(n_{1}),\dots,\ell(n_{q}), n)}$ implies 
$\|\tu f(\tu x,n_{1}, \dots,n_{q})-\tu f(\tu y,n_{1}, \dots,n_{q})\| \leq 2^{-n}$,
whenever $\tu x,\tu y  \in\left[-2^{\tu X}, 2^{\tu X}\right]$.
\end{theorem}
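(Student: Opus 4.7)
The plan is to adapt the classical argument (Ko, \cite{Ko91}, Chapter 2) that polynomial-time computability implies a polynomial modulus of continuity, and extend it to the mixed signature considered in Definition~\ref{def:bonendroit}. The intuition is that an oracle machine running in time $t$ can only read polynomially many bits of its real inputs, hence the output depends only on those bits; two real vectors that coincide up to that precision must therefore produce the same dyadic output.

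First I would fix an oracle Turing machine $M$ that computes $\tu f$ within the nonuniform time bound $t(\tu X,\ell(n_1),\dots,\ell(n_q),n)$ guaranteed by polynomial-time computability, where $t$ is a polynomial. The integer arguments $n_1,\dots,n_q$ are written on an ordinary input tape (contributing $\ell(n_i)$ to the input size), while each real coordinate of $\tu x$ is accessed through its own Cauchy oracle $\phi_i\in CF_{x_i}$. Next I would make the standard observation that a query at precision $m$ forces $M$ to read back a dyadic answer of $\Theta(m)$ bits from the oracle tape: hence any query issued during a computation that terminates in $t$ steps must satisfy $m\le t+c$ for some absolute constant $c$.

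Then I would define
\[
m_{\tu f}(\tu X,\ell(n_1),\dots,\ell(n_q),n)\;:=\;t(\tu X,\ell(n_1),\dots,\ell(n_q),n+1)+c+2,
\]
which is polynomial in its arguments since $t$ is. Given $\tu x,\tu y\in[-2^{\tu X},2^{\tu X}]$ with $\|\tu x-\tu y\|\le 2^{-m_{\tu f}(\cdots)}$, I would construct paired oracles $\phi\in CF_{\tu x}$ and $\psi\in CF_{\tu y}$ that agree on every precision $m\le m_{\tu f}-c-2$. This is possible coordinate by coordinate: for each $m$ in that range one has $2^{-m}\ge 2|x_i-y_i|$, so one may pick a common dyadic $\phi_i(m)=\psi_i(m)\in\dyadic_m$ within $2^{-m}$ of both $x_i$ and $y_i$; for larger $m$ the two oracles are completed independently by genuine binary approximations of $x_i$ and $y_i$. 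Running $M$ on both oracles at requested output precision $n+1$, all queries fall within the agreement zone, so $M^{\phi}(n+1)=M^{\psi}(n+1)=d$ for the same dyadic $d$. By correctness of $M$, $\|d-\tu f(\tu x,n_1,\dots,n_q)\|\le 2^{-(n+1)}$ and $\|d-\tu f(\tu y,n_1,\dots,n_q)\|\le 2^{-(n+1)}$, and the triangle inequality yields the desired bound $2^{-n}$.

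The only genuine subtlety is the bookkeeping needed to make Ko's one-variable argument fit the hybrid signature of Definition~\ref{def:bonendroit}: one must be careful that $t$ really is polynomial in $\tu X$, in $\ell(n_i)$, and in $n$ separately (which is exactly the content of polynomial-time computability in this setting), and that the construction of the paired oracle is carried out componentwise in $\tu x$. For a vector-valued $\tu f$ the modulus is obtained by taking the componentwise maximum, which remains polynomial. Everything else is routine.
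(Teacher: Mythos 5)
The paper itself does not prove this statement: it is presented as ``a well-known observation'' with no argument given (it is essentially Ko's theorem that polynomial-time computability yields a polynomial modulus of continuity, transposed to the mixed signature of Definition~\ref{def:bonendroit}). So there is no in-paper proof to compare against; what you have done is supply the standard argument that the paper implicitly invokes, and in substance it is correct: bound the precision of oracle queries by the running time, build a pair of Cauchy-function oracles for $\tu x$ and $\tu y$ that coincide on all precisions the machine can reach, conclude that both runs output the same dyadic $d$, and apply the triangle inequality at output precision $n+1$. The extension to the hybrid signature (integer arguments measured by $\ell(n_i)$, real arguments by $\tu X$, componentwise maximum for vector outputs) is handled exactly as one would expect.

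Two points deserve tightening. First, the key step ``any query issued in $t$ steps has precision $m\le t+c$'' is precisely where the oracle-machine convention matters, and your justification (``a query at precision $m$ forces $M$ to read back $\Theta(m)$ bits'') is not quite forced as stated: the machine could ignore most of the answer, and for $\tu x,\tu y$ straddling a low-precision dyadic even very-high-precision answers are constrained to differ in an early bit, so ``only polynomially many bits are read'' alone does not save the argument. You should instead appeal explicitly to the convention of \cite{Ko91} under which producing or consuming a precision-$m$ query/answer costs time $\Omega(m)$, which is what makes the bound $m\le t+c$ legitimate; also recall that Definition~\ref{def:cinq} requires $\operatorname{prec}(\phi(m))=m$ exactly, so the common value $\phi_i(m)=\psi_i(m)$ should be taken as a rounding of $x_i$ written with exactly $m$ fractional bits. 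Second, a trivial bookkeeping slip: you let queries go up to $t+c$ but state oracle agreement only for $m\le m_{\tu f}-c-2=t$; with your choice $m_{\tu f}=t+c+2$ the agreement construction in fact works for all $m\le m_{\tu f}-1=t+c+1$ (since $2\|\tu x-\tu y\|\le 2^{-m}$ there), which covers every possible query, so the fix is only to state the agreement zone correctly.
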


\olivier{A partir de là, c'est nos résultats}

\section{Functions from $\manonclass$ are in $\FPtime$}
\label{manonclassdansfptime}

The following proposition is proved by induction\footnote{Details on proofs are in Section \ref{sec:proofs} in appendix.} from standard arguments. The hardest  part is to prove that the class of polynomial time computable functions is preserved by the linear length ODE schema:  This is  Lemma \ref{lem:un}.

\begin{proposition} \label{prop:mcu:un}
All functions of $\manonclass$ are computable (in the sense of computable analysis) in polynomial time.
\end{proposition}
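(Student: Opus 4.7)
My plan is to proceed by structural induction on the construction of a function $f \in \manonclass$. For the base cases, I would verify that each generator is polynomial-time computable in the sense of Definition \ref{def:bonendroit}: the constants, projections, $\length{x}$, $x/2$, and the piecewise-affine $\signb{x}$ are immediate, while $+, -, \times$ over the reals are standard in computable analysis (see \cite{Ko91}). For composition, I would invoke the classical fact that the composition of polynomial-time computable real functions with polynomial modulus of continuity is polynomial-time computable; some bookkeeping is required because of the mixed integer/real arguments, but this is exactly what Definition \ref{def:bonendroit} is tailored to, since it measures complexity both in the length of the integer arguments and in the magnitude of the real ones.

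The linear length ODE closure, which is Lemma \ref{lem:un}, is the main obstacle. Suppose $f$ is defined from polynomial-time computable $g, h, u$ by
$$f(0,\tu y) = g(\tu y), \qquad \dderivl{f(x,\tu y)} = u(f(x,\tu y), h(x,\tu y), x, \tu y),$$
with $u$ essentially linear in $f(x,\tu y)$. Unfolding the discrete derivative gives $f(x+1,\tu y) = f(x,\tu y) + (\ell(x+1) - \ell(x)) \cdot u(\cdot)$. Two observations make the iteration tractable. First, $\ell(x+1) - \ell(x) \in \{0,1\}$ is nonzero only at the $O(\ell(x))$ points where $x+1$ is a power of $2$, so the recurrence has at most $\ell(x)$ nontrivial updates, and one can jump directly between these active indices in polynomial time. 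Second, writing $u = A \cdot f + B$ with $A, B$ essentially constant in $f$ and hence polynomially bounded in $x, \tu y$ (up to sign factors that contribute boundedly), each active step replaces $f$ by $(I + A) f + B$; after $O(\ell(x))$ such steps the length of $f(x,\tu y)$ remains bounded by a polynomial in the lengths of $x$ and $\tu y$.

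For the real-valued case I would also track error propagation: to output $f(x,\tu y)$ at precision $2^{-n}$, I would compute each iterate at precision $2^{-(n + p)}$ for a polynomial $p$ absorbing the $O(\ell(x))$ multiplications by polynomially bounded factors, using the inductive polynomial modulus of continuity of $u$. Since both the number of iterations and the required precision per step are polynomial in the size of the inputs, the overall running time stays polynomial under Definition \ref{def:bonendroit}. The essentially linear hypothesis is indispensable here: without it, iterating a polynomial map $\ell(x)$ times would lead to iterated-exponential growth of the values, which would immediately break the polynomial bounds on both value size and required precision. This is why controlling the growth of the coefficient matrix $A$ via the essentially-constant condition, rather than allowing a general polynomial dependence on $f$, is the technical heart of the argument.
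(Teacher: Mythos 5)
Your plan is correct and follows essentially the same route as the paper: structural induction with the linear length ODE closure as the crux, where your observation that $\ell(x+1)-\ell(x)$ is nonzero only when $x+1$ is a power of two is exactly the paper's change of variables $t=\ell(x)$ (Lemma \ref{fundob}), and your combination of a polynomial bound on the size of the iterates (from essential linearity) with step-by-step precision propagation via the modulus of continuity is the same mechanism as the paper's Lemma \ref{fundamencoreg} and its accompanying iteration lemma. No genuine gap to flag.
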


\begin{proof}
	This is proved by induction.  This is true for basis functions, from basic arguments from computable analysis. In particular as $\signb{.}$ is a continuous piecewise affine function with rational coefficients, it is computable in polynomial time from standard arguments. 
	
	Now, the class of polynomial time computable functions is  preserved by composition. The idea of the proof for $COMP(f,g)$, is that by induction hypothesis, there exists $M_f$ and $M_g$ two Turing machines computing in polynomial time $f: \RR \rightarrow \RR$ and $g : \RR \rightarrow \RR$. In order to compute $COMP(f,g)(x)$ with precision $2^{-n}$, we just need to compute $g(x)$ with a precision $2^{-m(n)}$, where $m(n)$ is the polynomial modulus of continuity of $f$. 
	%
	Then, we compute $f(g(x))$, which, by definition of $M_f$ takes a polynomial time in $n$. 
	Thus, since $\mathrm{P}_\RR^{\mathrm{P}_\RR} = \mathrm{P}_\RR$, $COMP(f,g)$ is computable in polynomial time, so the class of polynomial time computable functions is preserved under composition.
	It only remains to prove that the class of polynomial time computable functions is preserved by the linear length ODE schema: This is Lemma \ref{lem:un}. 
\end{proof}

\begin{lemma}\label{lem:un}
The class of polynomial time computable functions is preserved by the linear length ODE schema.
\end{lemma}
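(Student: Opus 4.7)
The plan is to reduce the computation of $\tu f$ to iterating an affine recurrence $O(\ell(x))$ times, then carry out an error analysis on approximate iterates.

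First, I would observe that the weight $\ell(x+1) - \ell(x)$ appearing in the discrete linear length ODE vanishes unless $x+1$ is a power of two, so $\tu f$ is constant on each dyadic interval $[2^k, 2^{k+1})$ and jumps only at $x = 2^k - 1$. Setting $\tu f_k = \tu f(2^k, \tu y)$, the essentially linear form of $\tu u$ collapses the definition to the recurrence
\begin{equation*}
\tu f_{k+1} = (I + A_k)\, \tu f_k + B_k,
\end{equation*}
where $A_k$ and $B_k$ are the matrix and vector obtained by evaluating the $\signb$-polynomial expressions $\tu A, \tu B$ at $(\tu f_k, \tu h(2^{k+1}-1, \tu y), 2^{k+1}-1, \tu y)$. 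Since $\tu f(x, \tu y) = \tu f_{\ell(x)-1}$ for $x \ge 1$, computing $\tu f(x, \tu y)$ requires only $K = O(\ell(x))$ non-trivial iterations, i.e.\ polynomially many in the input size.

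Second, I would derive uniform bounds. Since $\tu A$ and $\tu B$ are essentially constant in $\tu f$, every occurrence of $\tu f$ inside them lies under a $\signb$ and so contributes a factor in $[0,1]$. Combined with the induction hypothesis (polynomial-time computability of $\tu g$ and $\tu h$), this gives $\|A_k\|, \|B_k\| \le 2^{P_0(\mathsf{inputs})}$ for some polynomial $P_0$. A one-line induction using $\|\tu f_{k+1}\| \le (1+\|A_k\|)\|\tu f_k\| + \|B_k\|$ then yields $\|\tu f_k\| \le K \cdot 2^{(K+1)P_0}$, which is $2^{\mathrm{poly}(\mathsf{inputs})}$ because $K$ is polynomial. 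In particular $\tu f$ satisfies the mild growth required by Definition \ref{def:bonendroit}. Running the iteration approximately, with each $\tilde A_k, \tilde B_k$ computed to dyadic precision $2^{-N}$ by the induction hypothesis, one obtains the standard one-step bound
\begin{equation*}
\|\tilde{\tu f}_{k+1} - \tu f_{k+1}\| \le (1 + \|A_k\| + 2^{-N})\, \|\tilde{\tu f}_k - \tu f_k\| + 2^{-N}(\|\tu f_k\| + 1),
\end{equation*}
from which a routine telescoping gives a final error at most $K \cdot (2 + 2^{P_0})^{K} \cdot 2^{-N} \cdot 2^{\mathrm{poly}}$. Picking $N$ to be a suitable polynomial in $n$ and the input sizes makes this $\le 2^{-n}$, and each of the $K$ steps runs in polynomial time in $N$ by the induction hypothesis, so the whole computation is polynomial time.

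The main obstacle will be exactly this third step: the per-step amplification factor $1 + \|A_k\|$ can be as large as $2^{\mathrm{poly}}$, so a naive iteration would accumulate exponential-in-$K$ blow-up. The rescue is that $K = O(\ell(x))$ is already polynomial in the input length, hence $K \cdot \log(1 + \|A_k\|)$ remains polynomial and can be absorbed into $N$. Two finer points require care: (i) $\signb$ is continuous piecewise affine and hence Lipschitz, so replacing $\tu f_k$ by $\tilde{\tu f}_k$ inside each $\signb$ appearing in $\tu A, \tu B$ produces only a Lipschitz-bounded precision loss, keeping the above estimates valid; and (ii) when $\tu y$ contains real arguments one must work within a region $\tu y \in [-2^{\tu Y}, 2^{\tu Y}]$ and check that $P_0$ depends polynomially on $\tu Y$, which follows directly from the $\signb$-polynomial form of $\tu A, \tu B$ together with the inductive bound on $\tu h$.
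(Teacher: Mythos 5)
Your proof is correct and follows essentially the same route as the paper: unfolding the length-ODE into $O(\ell(x))$ affine steps (the paper's change of variables via Lemma~\ref{fundob}), bounding the growth of the iterates polynomially, and then controlling precision propagation step by step using the essentially linear form and the Lipschitz continuity of $\signb$. The only differences are presentational: the paper gets the growth bound from the explicit solution formula \eqref{eq:rq:fund} and organizes the precision bookkeeping through an affine modulus of continuity with a backward-defined precision schedule, whereas you derive the bound by direct induction and run a forward error-propagation argument, which amounts to the same estimate.
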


\newcommand{\vertiii}[1]{{\left\vert\kern-0.25ex\left\vert\kern-0.25ex\left\vert #1 
    \right\vert\kern-0.25ex\right\vert\kern-0.25ex\right\vert}}
 \newcommand\tnorm[1]{\vertiii{#1}}

\olivier{Avis au peuple: Notation $\tnorm{.}$ pourrie? euh, est-ce pas bien de noter comme ca un truc qui n'est pas classique avec une notation qui veut en général dire autre chose.} 

We propose to write $\MYVEC{x}$ for $2^{x}-1$ for conciseness.
We write $\tnorm{\cdots}$ for the sup norm of integer part: given some matrix $\tu
A=(A_{i,j})_{1 \le i \le n, 1 \le j \le m}$, 
$\tnorm{\tu A}=\max_{i,j}
\lceil A_{i,j} \rceil $. In particular, given a vector $\tu x$, it can be seen as a matrix with $m=1$, and $\tnorm{\tu x}$ is the sup norm of the integer part of its components.

\begin{proof} Using   Lemma
  \ref{fundob} in appendix (This lemma is repeated from \THEPAPIERS), when the schema of Definition \ref{def:linear lengt ODE} holds, 
  we can do a change of variable to consider $\tu f(x,\tu y)=\tu F(\ell(x),\tu y)$, with $\tu F$ solution of a discrete ODE of the form
  $\dderiv{\tu F(t,\tu y)}{t} = {\tu A} ( \tu F(t,\tu y), \tu h(\MYVEC{t},\tu y),
	\MYVEC{t}, 
	\tu y) \cdot
	  \tu F(t,\tu y)
	  +   {\tu B} ( \tu F(t,\tu y), \tu h(\MYVEC{t},\tu y),
	  \MYVEC{t}, 	  \tu y),$
	  that is to say, of the form  \eqref{eq:bcg} below. It then follows from:
  \end{proof}

	\begin{lemma}[Fundamental observation] \label{fundamencoreg}
	Consider the ODE 
	\begin{equation} \label{eq:bcg}
	\tu F^\prime(x,\tu y)=  {\tu A} ( \tu F(x,\tu y), \tu h(\MYVEC{x},\tu y),
	\MYVEC{x},
	\tu y) \cdot
	  \tu F(x,\tu y)
	  +   {\tu B} ( \tu F(x,\tu y), \tu h(\MYVEC{x},\tu y),
	  \MYVEC{x},
	  \tu y).
	\end{equation}
	Assume:
1. The initial condition $\tu G(\tu y) = ^{def}
	  \tu F(0, \tu y)$, as well as $\tu h(\MYVEC{x},\tu y)$ 
	  are polynomial time computable \unaire{x}. 
	  2.  ${\tu A} ( \tu F(x,\tu y), \tu h (\MYVEC{x},\tu y),
	  \MYVEC{x},
	  \tu y)$ and ${\tu B} ( \tu F(x,\tu y), \tu h(\MYVEC{x},\tu y),
	  \MYVEC{x},
	  \tu y)$ are \polynomial{} expressions essentially constant in $\tu F(x,\tu y)$.
	
	%
	
	Then, there exists a polynomial $p$ such that $\length{\tnorm{\tu F(x,\tu y)}}\leq p(x,\length{\tnorm{\tu y}})$ and 
	$\tu F(x,\tu y)$ is polynomial time computable \unaire{x}.
	\end{lemma}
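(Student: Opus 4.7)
The plan is to follow the same overall strategy as the corresponding lemma in \THEPAPIERS{} for integer functions, but to reinforce every estimate to control precision in the sense of computable analysis, since here $\tu F$, $\tu y$ and $\tu h$ may be real-valued. Since the ODE is of discrete type, the solution at $x$ is obtained by iterating the finite-difference recurrence
$$\tu F(x+1,\tu y) = \tu F(x,\tu y) + \tu A(\tu F(x,\tu y),\tu h(\MYVEC{x},\tu y),\MYVEC{x},\tu y)\cdot \tu F(x,\tu y) + \tu B(\tu F(x,\tu y),\tu h(\MYVEC{x},\tu y),\MYVEC{x},\tu y)$$
$x$ times, starting from $\tu G(\tu y)$. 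The idea is that $x$ iterations is polynomial in $x$, which matches the required \emph{polynomial time \unaire{x}} statement.

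First I would establish the length/magnitude bound. Because $\tu A$ and $\tu B$ are $\bar{\fonction{cond}}$-polynomial expressions essentially constant in $\tu F$, every subterm that depends on $\tu F$ lies inside a $\signb{\cdot}$, which is bounded by $1$. Hence there is a polynomial $q$ (extracted structurally from $\tu A,\tu B$) such that, whenever $\tu h(\MYVEC{x},\tu y)$ has length at most some polynomial in $x$ and $\length{\tnorm{\tu y}}$ (which it does, by the induction hypothesis of Proposition \ref{prop:mcu:un} and the fact that producing an approximation takes polynomial time in $x$ and $n$), one has $\tnorm{\tu A(\cdot)},\tnorm{\tu B(\cdot)} \le 2^{q(x,\length{\tnorm{\tu y}})}$ uniformly in $\tu F$. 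Unrolling the recurrence and applying a discrete Grönwall-style argument gives
$$\tnorm{\tu F(x+1,\tu y)} \le (1+\tnorm{\tu A(\cdot)})\,\tnorm{\tu F(x,\tu y)} + \tnorm{\tu B(\cdot)},$$
so iterating $x$ times yields $\length{\tnorm{\tu F(x,\tu y)}} \le p(x,\length{\tnorm{\tu y}})$ for some polynomial $p$. This is exactly the bound of the statement.

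Next I would prove polynomial-time computability \unaire{x}. Given a target precision $2^{-n}$, the strategy is: pick an internal precision $2^{-N}$ with $N = N(x,\length{\tnorm{\tu y}},n)$ polynomial in its arguments, then compute inductively approximations $\tilde{\tu F}(0), \tilde{\tu F}(1), \dots, \tilde{\tu F}(x)$ by plugging the current approximation into approximations of $\tu A$, $\tu B$ and $\tu h$, each evaluated at precision $2^{-N}$ using the polynomial-time algorithms provided by the induction hypothesis. The error after one step is amplified by a factor at most $1+\tnorm{\tu A(\cdot)} \le 2^{q(x,\length{\tnorm{\tu y}})}$, plus the errors introduced by approximating $\tu A$, $\tu B$ and $\tu h$ (controlled by their polynomial moduli of continuity). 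Iterating this bound $x$ times multiplies the initial error by at most $2^{x\cdot q(x,\length{\tnorm{\tu y}})}$, so choosing $N := n + x\cdot q(x,\length{\tnorm{\tu y}}) + c$ for a suitable constant $c$ suffices to guarantee a final error $\le 2^{-n}$. The total running time is $x$ steps, each consisting of one evaluation of $\tu A$, $\tu B$ and $\tu h$ at precision $2^{-N}$ on inputs of length $\le p(x,\length{\tnorm{\tu y}})+N$; by the induction hypothesis this is polynomial in $N$, $x$ and $\length{\tnorm{\tu y}}$, hence polynomial \unaire{x} and in $\length{\tnorm{\tu y}}$, $n$.

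The main obstacle is this precision-propagation bookkeeping: in the discrete-integer setting of \THEPAPIERS{} the recurrence is computed exactly, so one only had to argue about the length of intermediate values, whereas here every evaluation of $\tu A$, $\tu B$, and $\tu h$ introduces a rounding error that must be tracked through $x$ iterations without blowing up. The essential linearity hypothesis is what saves us: it prevents $\tnorm{\tu A}$ from depending on $\tu F$ (so the amplification factor per step is a priori polynomial in $x$), which is precisely what keeps the total precision loss bounded by $x\cdot\mathrm{poly}(x,\length{\tnorm{\tu y}})$, i.e.\ still polynomial. Once this is in place, the conclusion follows the same way as in \THEPAPIERS, and Lemma \ref{lem:un} — and hence Proposition \ref{prop:mcu:un} — is established.
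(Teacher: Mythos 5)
Your proposal is correct and follows essentially the same route as the paper: the size bound comes from unrolling the linear recurrence with the essential-constancy of $\tu A,\tu B$ (the paper does this via the explicit solution formula of Lemma \ref{def:solutionexplicitedeuxvariables}, which is just your unrolled iteration), and polynomial-time computability comes from stepping through the recurrence while propagating precision, using that the right-hand side has an affine (linear-in-$n$) modulus of continuity thanks to $\signb$ and the magnitude bound, so the total precision overhead is $x\cdot\mathrm{poly}$ and the whole computation is polynomial \unaire{x}. The only cosmetic caveat is that $\tu A$ does vary with $\tu F$ through $\signb$ terms; what matters (and what you in fact use) is that its magnitude and Lipschitz behaviour are bounded independently of $\tu F$, exactly as in the paper's argument.
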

\begin{proof}
The fact that there exists a polynomial $p$ such that $\length{\tnorm{\tu f(x,\tu y)}}\leq p(x,\length{\tnorm{\tu y}})$, follows from the fact that we can write some explicit formula for the solution of \eqref{eq:bcg}: This is Lemma \ref{def:solutionexplicitedeuxvariables} in appendix, repeated from  \THEPAPIERS. 
Now,  bounding the size of the right hand side of formula \eqref{eq:rq:fund} provides the statement. 

Now the fact that $\tu F(x,\tu y)$ is polynomial time computable, follows from a reasoning similar to the one of following lemma (the lemma below restricts the form of the recurrence  by lack of space, but the more general recurrence of \eqref{eq:bcg} would basically not lead to any difficulty): The fact that the modulus of continuity of a linear expression of the form of the right hand side of \eqref{eq:bcg} is necessarily affine in its first argument follows from the hypothesis and from previous paragraph, using the fact that $\signb$ has a linear modulus of convergence.
\end{proof}

\begin{lemma}
Suppose that function $\tu f: \N \times \R^{d} \to \R^{d'}$ is such that for all $x, \tu y$,  
\olivier{Pas assez général sous cette forme, meme si ca marche pareil. Tenter d'escroquer en disant qu'on manque de place. Est-ce une escroquerie?}
$$		\tu f(0,\tu y) 
		=\tu g(\tu y) 
		\quad and \quad
		\tu f(x+1,\tu y) 
		= \tu h(\tu f(x,\tu y),x,\tu y)) 
	$$
for some functions $\tu g: \R^{d} \to \R^{d'}$ and $\tu h: \R^{d'} \times \R \times \R^{d} \to \R^{d'}$
both computable in polynomial time \unaire{x}.
%
Suppose that the modulus $m_{h}$ of continuity of $\tu h$ is affine in its first argument:  
	For all $\tu f,\tu f' \in [-2^{\tu F}, 2^{\tu F}]$, $\tu y \in [-2^{\tu Y}, 2^{\tu Y}]$, 
	$\|\tu f-\tu f'\| \le 2^{-m_{h}(\tu F,\ell(x),\tu Y,n)}$ implies $|\tu h(\tu f,x,\tu y)-\tu h(\tu f',x,\tu y)| \le 2^{-n}$
with $m_{h}(\tu F, \ell(x),\tu Y,n) = \alpha n + p_{h}(\tu F,\ell(x),\tu Y)$ for some $\alpha$.
Suppose there exists a polynomial $p$ such that $\length{\tnorm{\tu f(x,\tu y)}}\leq p(x,\length{\tnorm{\tu y}})$.

Then $\tu f(x, \tu y)$ is computable in  polynomial time \unaire{x}.
\end{lemma}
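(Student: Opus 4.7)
The plan is to compute a sequence of approximations $\tilde{\tu f}_{0},\tilde{\tu f}_{1},\ldots,\tilde{\tu f}_{x}$ of $\tu f(0,\tu y),\tu f(1,\tu y),\ldots,\tu f(x,\tu y)$, starting from a sufficiently precise evaluation of $\tu g(\tu y)$ and then iterating the polynomial-time algorithm for $\tu h$, where the precision at each step is chosen in advance so that the accumulated loss of accuracy after $x$ applications of $\tu h$ still leaves a $2^{-n}$-approximation of $\tu f(x,\tu y)$.

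First I would fix a polynomial bound on the ranges. Using the hypothesis, choose $\tu F$ polynomial in $x$ and $\ell(\tnorm{\tu y})$ so that $\ell(\tnorm{\tu f(k,\tu y)})\le \tu F$ for every $k\le x$, and take $\tu Y$ with $\tu y\in[-2^{\tu Y},2^{\tu Y}]$. Then I would determine the required precisions $n_{0},n_{1},\ldots,n_{x}=n$ by a backward analysis. Splitting the error budget at each step between the continuity bound on $\tu h$ and the rounding error of its approximate evaluation, and using the affine modulus $m_{h}(\tu F,\ell(k),\tu Y,n)=\alpha n+p_{h}(\tu F,\ell(k),\tu Y)$, the recurrence $n_{k}\ge\alpha(n_{k+1}+1)+p_{h}(\tu F,\ell(x),\tu Y)$ is sufficient. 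With $\alpha=1$, which is the case in the intended applications since the linearity of the right-hand side of \eqref{eq:bcg} together with the linear modulus of $\signb{\cdot}$ forces $\alpha=1$, this unrolls to $n_{0}=n+(x+1)\cdot q$ for a polynomial $q$ in $\tu F,\ell(x),\tu Y$, which is itself polynomial in $x$, $\ell(\tnorm{\tu y})$ and $n$.

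Next I would construct $\tilde{\tu f}_{0}$ by invoking the polynomial-time algorithm for $\tu g$ at target precision $2^{-n_{0}}$, and then, for each $k=0,1,\ldots,x-1$, compute $\tilde{\tu f}_{k+1}$ by calling the polynomial-time algorithm for $\tu h$ on the current approximation $\tilde{\tu f}_{k}$ at target precision $2^{-(n_{k+1}+1)}$, with $k$ and $\tu y$ as the other arguments. Each call runs in time polynomial in the bit-size of its inputs, hence polynomial in $x$, $\ell(\tnorm{\tu y})$ and $n$; since there are at most $x+1$ such calls, the total running time is polynomial with respect to the value of $x$ and to $n$ and $\ell(\tnorm{\tu y})$. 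A simultaneous induction on $k$ confirms the precision invariant $\|\tilde{\tu f}_{k}-\tu f(k,\tu y)\|\le 2^{-n_{k}}$: the base case is the accuracy of the approximation of $\tu g$, and the inductive step combines the continuity bound on $\tu h$ applied to $\tilde{\tu f}_{k}$ (which lies in $[-2^{\tu F},2^{\tu F}]$ by the precision invariant and the choice of $\tu F$) with the rounding error of the approximate evaluation of $\tu h$. Setting $k=x$ yields $\|\tilde{\tu f}_{x}-\tu f(x,\tu y)\|\le 2^{-n}$.

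The main obstacle to watch is the value of the slope $\alpha$ in the affine modulus: if $\alpha>1$, the backward recurrence $n_{k}=\alpha n_{k+1}+\mO(1)$ grows geometrically in $x$ and $n_{0}$ becomes exponential, which would break the whole scheme. We avoid this precisely because, in the way this lemma is applied through Lemma \ref{fundamencoreg}, the linearity of the recurrence and the linear modulus of $\signb{\cdot}$ force $\alpha=1$, reducing the backward recurrence to a linear function of $x$ and keeping every intermediate quantity of size polynomial in $x$, $\ell(\tnorm{\tu y})$ and $n$.
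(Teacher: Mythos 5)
Your proposal is correct and follows essentially the same route as the paper's proof: compute the iterates forward, with the required intermediate precisions fixed in advance from the affine modulus of continuity and the polynomial bound on $\length{\tnorm{\tu f(k,\tu y)}}$ — your backward recurrence $n_{k}\ge\alpha(n_{k+1}+1)+p_{h}(\tu F,\ell(x),\tu Y)$ is exactly the paper's choice $p(i)=\alpha^{l-i}n+\sum_{k=i}^{l-1}\alpha^{k-i}p_{h}(\tu Z_{k},\ell(k),\tu Y)$. The only notable differences are to your credit: you explicitly split each step's error budget between the modulus bound and the rounding error of the approximate evaluation of $\tu h$ (the paper glosses over this), and you flag that the scheme is polynomial in the value of $x$ only when $\alpha=1$, a restriction the paper leaves implicit even though its own formula $\alpha^{l-i}n$ would otherwise be exponential, and which is indeed harmless in the intended application via the linear modulus of $\signb{\cdot}$.
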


%
%
%
%
%
%
%
%
%
%
%
%
%

\begin{proof}
The point is that we can compute  $\tu f(n,\tu y)$ by
\shortermcu{
\begin{align*}
\tu f(n, \tu y) &= \tu h(\tu f(n-1,\tu y), n-1, \tu y) \\
&= \tu h(\tu h(\tu f(n-2,\tu y), n-2, \tu y), n-1, \tu y) \\
&= \dots \\
&= \underbrace{\tu h(\tu h(\dots \tu h}_{n}(\underbrace{f(0,\tu y)}_{g(\tu y)}, 0, \tu y)\dots), n-1, \tu y)
\end{align*}

Basically, the strategy is to compute 
}
 $\tu z_{0}=\tu f(0,\tu y)=\tu g(\tu y)$,
 then $\tu z_{1}=\tu f(1,\tu y) = \tu h(\tu z_{0}, 0, \tu y)$, 
then $\tu z_{2}=\tu f(2,\tu y) = \tu h(\tu z_{1}, 1, \tu y)$, 
 then \dots,
  then $\tu z_{m}=\tu f(m,\tu y) = \tu h(\tu z_{m-1}, m-1, \tu y)$.
One needs to do so with some sufficient precision so that the result given by $\tu f(l,\tu y)$ is correct, and so that the whole computation can be done in polynomial time. 

Given $\tu y$, we can determine $\tu Y$ such that $\tu y \in [-2^{\tu Y},2^{\tu Y}]$. 
Assume for now that for all $m$,
\begin{equation}
\label{eq:bienborne}
z_{m} \in [-2^{Z_{m}},2^{Z_m}]
\end{equation}

\olivier{Version top-down: pour aider à comprendre.

\begin{itemize}
\item 
It is basically sufficient to determine $z_{l}=\tu f(l,\tu y)=\tu h(z_{l-1},l-1,\tu y)$ with precision $2^{-n}$.  (*)

\item To get such an approximation (*), it suffices to approximate $z_{l-1}$ 
with precision $2^{-m_{h}(Z_{l-1},\ell(l-1),Y,n)}$ (**). 
Then
indeed, $z_{l}$ could then be computed in a time $poly(Z_{l-1},\ell(l-1), Y, n )$.  

\item To get such an approximation (**) of  $z_{l-1}=\tu f(l-1,\tu y)=\tu h(z_{l-2},l-2,\tu y)$ 
, it suffices to approximate $z_{l-2}$
with precision $2^{-m_{h}(Z_{l-2},\ell(l-2),Y,m_{h}(Z_{l-1},\ell(l-1),Y,n))}$ 

We have: 
$$m_{h}(Z_{l-2},\ell(l-2),Y,m_{h}(Z_{l-1},\ell(l-1),Y,n)) = \alpha^{2} n + \alpha p_{h}(Z_{l-1},\ell(l-1),Y) + p_{h}(Z_{l-2},\ell(l-2),Y)$$

Then
indeed, $z_{l-1}$ could then be computed in a time $poly(Z_{l-2},\ell(l-2), Y, m_{h}(Z_{l-1},\ell(l-1),Y,n) )$.  
\item and so on, until $z_{0}$

\end{itemize}
}

\olivier{Version bottom-up}

For $i=0,1,\dots l$, consider 
$p(i)= \alpha^{l-i} n + \sum_{k=i}^{l-1} \alpha^{k-i} p_{h}(\tu Z_{k},\ell(k),\tu Y).$

Using the fact that $\tu g$ is computable, approximate $\tu z_{0}=\tu g(\tu y)$ with precision $2^{-p(0)}$. This is doable polynomial time \unaire{p(0)}.

Then for $i=0,1,\dots, l$, using the approximation of $\tu z_{i}$ with  precision $2^{-p(i)}$, compute an approximation of $\tu z_{i+1}$ with precision $2^{-p(i+1)}$: this is feasible to get precision $2^{-p(i+1)}$ of $\tu z_{i+1}$, as $\tu z_{i+1}=\tu f(i+1,\tu y) = \tu h(\tu z_{i},i,\tu y)$, 
 it is sufficient to consider precision 
 $$
 \begin{array}{lll}
 m_{h}(\tu Z_i,\ell(i),\tu Y,p(i+1)) &=&\alpha p(i+1) + p_{h}(\tu Z_{i},\ell(i),\tu Y)  \\
 &=& \alpha^{l-i-1+1} n \\ &&+  \sum_{k=i+1}^{l-1} \alpha^{k-i-1+1} p_{h}(\tu Z_{k},\ell(k),\tu Y)
 + p_{h}(\tu Z_{i},\ell(i),\tu Y) \\ &=& p(i).
 \end{array}$$
 Observing that $p(l)=n$, we get $z_{l}$ with precision $2^{-n}$.
 All of this is is indeed feasible in polynomial time \unaire{l}, under the condition that all the $Z_{i}$ remain of size polynomial, that is to say, that we have
 indeed \eqref{eq:bienborne}. But this follows from our hypothesis on $\length{\tnorm{\tu f(x,\tu y)}}$.

\end{proof}

\section{Functions from $\FPtime$ are in $\manonclass$}
\label{fptimedansmanonclass}

This section is devoted to prove a kind of reverse implication of Proposition \ref{prop:mcu:un}: For any polynomial time computable function $\tu f: \N^{d} \to \R^{d'}$, we can construct some function $\tilde{\tu f} \in \manonclass$ that simulates the computation of $f$. This basically requires to be able to simulate the computation of a Turing machine using some functions from $\manonclass$.
%
%
%

\newcommand\base{4}
\newcommand\symboleun{1}
\newcommand\symboledeux{3}
\newcommand\encodageconfiguration{\gamma_{config}}
\newcommand\encodagemot{\gamma_{word}}

%
%
%

Consider without loss of generality some Turing machine $$M= (Q, \{0,1\}, q_{init}, \delta, F) $$ using the  symbols $0,\symboleun,\symboledeux$, where $B=0$ is the blank symbol. The reason of the choice of symbols $\symboleun$ and $\symboledeux$ will be made clear latter.  We assume $Q=\{0,1,\dots,|Q|-1\}$.  Let 
$$ \dots  l_{-k} l_{-k+1} \dots l_{-1} l_{0} r_0 r_1 \dots r_n .\dots$$ 
denote the content of the tape of the Turing machine $M$. In this representation, the head is in front of symbol $r_{0}$, and $l_i, r_{i} \in  \{0,\symboleun,\symboledeux\}$ for all $i$. 
Such a configuration $C$ can be denoted by $C=(q,l,r)$, where $l,r \in \Sigma^{\omega}$ are (possibly infinite, if we consider that the tape can be seen as a non finite word, in the case there is no blank on it) words over alphabet $\Sigma=\{\symboleun,\symboledeux\}$ and $q \in Q$ denotes  the internal state of $M$.

The idea is that such a configuration $C$ can also be encoded by some element $\encodageconfiguration(C)=(q, \bar l,\bar r) \in \N \times \R^{2}$, by considering 
\begin{eqnarray*}
  \bar r &=& r_0 \base^{-1} + r_1 \base^{-2} + \dots +  r_n \base^{-(n+1)} + \dots , \\
  \bar l &= & l_{0} \base^{-1} + l_{-1} \base^{-2} + \dots + l_{-k} \base^{-(k+1)} + \dots 
\end{eqnarray*} 

Basically, in other words,  we encode the configuration of bi-infinite tape Turing machine $M$ by real numbers using their radix \base{}  encoding, but using only digits $\symboleun$,$\symboledeux$. 
If we write: $\encodageconfiguration: \Sigma^{\omega} \to \R$ for the function that maps word $w=w_{0} w_{1} w_{2} \dots$ to
$\encodagemot(w)= w_0 \base^{-1} + w_1 \base^{-2} + \dots +  w_n \base^{-(n+1)} + \dots$, we can also write
$\encodageconfiguration(C)=\encodageconfiguration(q,l,r)= (q,\encodagemot(l),\encodagemot(r)).$

\newcommand\Image{\mathcal{I}}

Notice that this lives in $Q \times [0,1]^{2}$. Actually, if we write denote the image of $\encodagemot: \Sigma^{\omega} \to \R$ by $\Image$, this even lives in $Q \times \Image^{2}$. 
%
%
%

\shortermcu{
A key point is to observe that }
\begin{lemma}
We can construct some function $\bar {Next}$ in $\manonclass$ that simulates one step of $M$, i.e. that computes the $Next$ function sending a configuration $C$ of Turing machine $M$ to the next one.  This function is essentially linear.
\end{lemma}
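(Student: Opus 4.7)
The plan is to build $\bar{Next}$ as a case analysis on the transition function $\delta$, using $\signb$ to build all the needed indicator functions, and to write each update as an affine function of $(\bar l,\bar r)$ with constant (rational) coefficients, which will automatically be essentially linear in the sense of Definition \ref{def:essentiallylinear}.

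First I would extract the symbol under the head. Because the only digits actually appearing in the base-$\base$ expansion of $\bar r$ are $\symboleun=1$ and $\symboledeux=3$, a direct computation shows that $r_0=1$ forces $\bar r\in[\tfrac13,\tfrac12]$ while $r_0=3$ forces $\bar r\in[\tfrac56,1]$. There is therefore a clean gap around $\tfrac 23$ (this is exactly why the digits $1$ and $3$, rather than $0$ and $1$, were chosen). Hence the boolean $[r_0=3]$ is computed by $\signb(\bar r - \tfrac{1}{2})$ (after a suitable affine rescaling so that the thresholds $\tfrac14$ and $\tfrac34$ of $\signb$ fall inside the gap), and $[r_0=1]=1-[r_0=3]$. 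All the fractions involved are obtained from $\mathbf 1$ by iterating $x/2$ and $+$, so these indicators lie in $\manonclass$. The analogous construction gives $[l_0=s]$ from $\bar l$.

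Next I would build state indicators. Since $q\in\{0,\dots,|Q|-1\}$ is an integer, for every $q_0\in Q$ the function $[q=q_0]$ can be written as $1-\signb(q-q_0)-\signb(q_0-q)$: on integers $\signb$ agrees with $\sign$, which is $0$ at $0$ and $1$ otherwise, so this expression returns $1$ iff $q=q_0$ and $0$ otherwise. Taking products with the symbol indicators gives, for each pair $(q_0,s_0)\in Q\times\{\symboleun,\symboledeux\}$, a selector $\chi_{q_0,s_0}\in\{0,1\}$ that is $1$ exactly on configurations in state $q_0$ reading $s_0$. Each $\chi_{q_0,s_0}$ is a \polynomialb{} expression which, viewed as a function of $(\bar l,\bar r)$, is essentially constant since its only dependence on $\bar l,\bar r$ is inside $\signb$.

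Now for each $(q_0,s_0)$ with $\delta(q_0,s_0)=(q_1,s_1,\mathrm{dir})$ I would write the affine update
\[
\bar r_{q_0,s_0}' = \begin{cases} 4\bar r - s_0 & \text{if } \mathrm{dir}=R,\\ \tfrac{1}{4}\bar r+\tfrac14 s_1+\tfrac14(\bar l - l_0) & \text{if } \mathrm{dir}=L,\\ \bar r - \tfrac{s_0}{4}+\tfrac{s_1}{4} & \text{if } \mathrm{dir}=\text{stay},\end{cases}
\]
and symmetrically for $\bar l_{q_0,s_0}'$; here $l_0,r_0\in\{\symboleun,\symboledeux\}$ are obtained from the symbol indicators as $l_0=\symboleun+2\cdot[l_0=\symboledeux]$ etc. Each such update is a polynomial expression which is essentially linear in $(\bar l,\bar r)$ (the coefficients being rational constants). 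Finally I would combine the cases by
\[
\bar r' \;=\; \sum_{(q_0,s_0)} \chi_{q_0,s_0}\cdot \bar r_{q_0,s_0}',\qquad \bar l'\;=\;\sum_{(q_0,s_0)} \chi_{q_0,s_0}\cdot \bar l_{q_0,s_0}',\qquad q'\;=\;\sum_{(q_0,s_0)}\chi_{q_0,s_0}\cdot q_1.
\]
Since each $\chi_{q_0,s_0}$ is essentially constant in $(\bar l,\bar r)$ and each $\bar r'_{q_0,s_0},\bar l'_{q_0,s_0}$ is affine in $(\bar l,\bar r)$ with constant coefficients, the whole expression is essentially linear in $(\bar l,\bar r)$ as required, and is built from the basis functions and operations of $\manonclass$ (composition only, no ODE is needed for a single step).

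The main obstacle is the first step: ensuring that reading the current digit really does give a clean $0/1$ value via $\signb$, which forces the use of digits $\symboleun=1,\symboledeux=3$ (with the gap between $\tfrac12$ and $\tfrac56$) and careful rescaling so the thresholds of $\signb$ fall strictly inside the gap; once that is in place, essential linearity is immediate because every non-constant appearance of $\bar l$ or $\bar r$ in the final expression occurs either as a degree-$1$ factor with a constant coefficient, or inside a $\signb$, which contributes degree $0$.
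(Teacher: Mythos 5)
Your overall strategy is the same as the paper's: exploit the fact that only the digits $\symboleun,\symboledeux$ occur in the radix-$\base$ expansions so that the head symbol can be read off by the continuous $\signb{\cdot}$ (the paper packages this as functions $i$ and $\sigma$ approximating integer and fractional part, and as nested $If$'s; your sum of products of indicator selectors is an equivalent rewriting), then combine per-transition affine shift updates, the whole thing being essentially linear because $\bar l,\bar r$ occur either linearly with constant coefficients or inside $\signb{\cdot}$. Two concrete defects remain, however. First, you drop the blank symbol: the machine also has transitions on reading $0$ (the head does walk onto blank cells), and your selectors only range over $s_0\in\{\symboleun,\symboledeux\}$; moreover $[r_0=\symboleun]=1-[r_0=\symboledeux]$ is false when $\bar r=0$. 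The paper's construction explicitly includes the $0$ branch (its $i(x)=If(x,0,If(x-1,1,3))$ and its case $v\in\{0,\symboleun,\symboledeux\}$); you would need an extra indicator distinguishing $\bar r=0$ from $\bar r\ge \tfrac14$ (e.g.\ $\signb{4\bar r}$), which is easy but missing. Second, your left-move update for $\bar r$ is wrong as written: prepending $l_0$ after overwriting $r_0$ by $s_1$ gives $\bar r'=\tfrac14\bigl(\bar r-\tfrac{s_0}{\base}+\tfrac{s_1}{\base}\bigr)+\tfrac{l_0}{\base}$ (and $\bar l'=\base\,\bar l-l_0$), whereas your expression $\tfrac14\bar r+\tfrac14 s_1+\tfrac14(\bar l-l_0)$ has the wrong scaling on $s_1$, omits the $-s_0/16$ term, and the term $\tfrac14(\bar l-l_0)$ is not $\tfrac14 l_0$; as stated the map does not send a configuration to its successor. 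Both points are repairable within your scheme, but they must be repaired.

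Minor inaccuracies that do not break the argument: with trailing blanks allowed, the correct ranges are $\bar r\in[\tfrac14,\tfrac12]$ for $r_0=\symboleun$ and $\bar r\in[\tfrac34,1]$ for $r_0=\symboledeux$ (not $[\tfrac13,\tfrac12]$ and $[\tfrac56,1]$), so a gap still exists and a rescaled $\signb{\cdot}$ still separates the two cases; and on the integers $\signb{\cdot}$ is $0$ on nonpositive integers and $1$ on positive ones (not ``$0$ at $0$ and $1$ otherwise''), which is in fact exactly what makes your state indicator $1-\signb{q-q_0}-\signb{q_0-q}$ correct.
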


\begin{proof}

We can write  $l = l_0 l^\bullet $ and $r = r_0r^\bullet $, where $l^\bullet$ and $r^\bullet$  corresponding to  (possibly infinite) word $l_{-1} l_{-2} \dots$ and $r_{1} r_{2} \dots$ respectively.
%
%
\vspace{-0.2cm}
	\begin{center}
	\begin{tabular}{c c|c|c|c c}
		\hline 
		... & $l^\bullet $ & $l_0 $ & $r_0$ & $ r^\bullet$ & ... \\ 
		\hline 
		\multicolumn{1}{c}{} & 
		\multicolumn{2}{@{}c@{}}{$\underbrace{\hspace*{\dimexpr6\tabcolsep+3\arrayrulewidth}\hphantom{012}}_{l}$} & 
		\multicolumn{2}{@{}c@{}}{$\underbrace{\hspace*{\dimexpr6\tabcolsep+3\arrayrulewidth}\hphantom{3}}_{r}$}
	\end{tabular} 	
	\end{center}
	
%
%
The function $ {Next}$ is basically of the form
\begin{align*}
\mathit{Next}(q,l,r) &= \mathit{Next}(q,l^\bullet l_0,r_0r^\bullet) = (q', l', r')\\
&= (q', l^\bullet l_0 x, r^\bullet) ~ whenever  ~ \delta(q,r_{0}) = (q',x, \rightarrow) \\
& ~~~~ (q', l^\bullet, l_0 x r^\bullet) ~ whenever ~ \delta(q,r_{0}) = (q',x, \leftarrow) \\
&~~~~~ \dots
\end{align*} \\[-0.7cm]
where the dots is a list of lines of similar types for the various values of $q$ and $r_0$.
This rewrites as a function $\bar{Next}$ which is similar, working over the representation of the configurations as reals:
\begin{align*}
\mathit{\bar{Next}}(q,\bar l, \bar r) &= \mathit{\bar{Next}}(q,\bar{l^\bullet l_0},\bar{r_0r^\bullet}) = (q', \bar{l'}, \bar{r'})\\
&= (q', \bar{l^\bullet l_0 x}, \bar{r^\bullet}) ~ whenever  ~ \delta(q,r_{0}) = (q',x, \rightarrow) \\
& ~~~~ (q', \bar{l^\bullet}, \bar{l_0 x r^\bullet}) ~ whenever ~ \delta(q,r_{0}) = (q',x, \leftarrow) \\
&~~~~~ \dots
\end{align*} \\[-0.7cm]
%
\vspace{-0.5cm}
\begin{equation}
\begin{array}{l} \label{textaremplacer}
\mbox{where $r_{0} =  \lfloor \base \bar{r}\rfloor$ and  \hfill } \\
\mbox{
$\bullet$  in the first case ``$\rightarrow$'' : $\bar{l'} = \base^{-1} \bar l + \base^{-1} x $ and $\bar{r'}  = \bar{r^\bullet} = \{\base \bar r\} $} \\
\mbox{
$\bullet$ in the second case ``$\leftarrow$'' : $\bar{l'} =\bar{ l^\bullet} = \{\base \bar l \} $ and $\bar{r'} = \base^{-2} \bar{r^\bullet} + \base^{-1} x + \lfloor \base \bar{l}\rfloor $ }
\end{array}
\end{equation}

Here $\{.\}$ stands for fractional part.

The problem about such expressions is that we cannot expect the integer part and the fractional part function to be in $\manonclass$ (as functions of this class are computable, and hence continuous, unlike the fractional part).   But, a key point is that from our trick of using only symbols $\symboleun$ and $\symboledeux$, we are sure
that in an expression like $\lfloor \bar r \rfloor$, either it values $0$ (this is the specific case where there remain only blanks in $r$), or that $\base \bar r$ lives in interval
$[\symboleun,\symboleun+1)$ or in interval $[\symboledeux,\symboledeux+1)$. That means that we could replace $\{\base \bar r\}$ by $\sigma(\base \bar r)$ where $\sigma$ is some  (piecewise affine) function obtained by composing in a suitable way the basic functions of $\manonclass$. 
 Namely, 
 define $If (b, T, E)$ as a synonym for $ \signb{b} \times T + (1 - \signb{b}) \times E$.  Then, considering $i(x)= If(x,0,If(x-1,1,3))$, $\sigma(x)=x-i(x)$, then  $i(\base \bar r)$ would be the same as $\lfloor \base \bar r \rfloor$, and $\sigma(\base \bar r)$ would be the same as $\{ \base \bar r \}$ in our context in above expressions.
In other words, we could replace the paragraph \eqref{textaremplacer} above by:
\begin{equation*}
\begin{array}{l}
\mbox{
where $r_{0}= i(\base \bar r)$} \\
\mbox{$\bullet$ in the first case ``$\rightarrow$'' : $\bar{l'} = \base^{-1} \bar l + \base^{-1} x $ and $\bar{r'}  = \bar{r^\bullet} = \sigma(\base \bar r) $} \\
\mbox{$\bullet$ in the second case ``$\leftarrow$'' : $\bar{l'} =\bar{ l^\bullet} = \sigma( \base \bar l) $ and $\bar{r'} = \base^{-2} \bar{r^\bullet} + \base^{-1} x + i(\base \bar{l})$}
\end{array}
\end{equation*}
and get something that would be still work exactly, but using only functions from $\manonclass$. 
Notice that these imbrications of $If$ rewrite to an essentially constant expression.
%
%
%
%

We can then write: 
$$q'= If(q-0, nextq^{0}, If(q-1, nextq^{1}, \cdots, If(q-|Q-2|, nextq^{|Q|-2},nextq^{|Q|-1})))$$
\olivier{
$$q'= \sum_{u=0}^{|Q|-1} \left( \prod_{i=0}^{u-1} Cond(q-u) \right) \cdot (1-Cond(q-i)) \cdot next^{q}$$
}
where 
$$nextq^{q}= If(v-0,nextq^{q}_{0},If(v-\symboleun,nextq^{q}_{\symboleun},nextq^{q}_{\symboledeux}))$$
\olivier{
$$nextq^{q}= \sum_{v \in \{0,\symboleun,\symboledeux\}} \left( \prod_{j \in \{0,\symboleun,\symboledeux\}, j<i}  Cond(r - \sigma(\base \bar r)-v) \right) \cdot (1-Cond(r - \sigma(\base \bar r)-j) \cdot next^{q}_{v}$$
}
and where $nextq^{q}_{v}=q'$ if $\delta(q,v)= (q',x,m)$ for $m \in \{\leftarrow,\rightarrow\}$, for $v \in \{0,\symboleun,\symboledeux\}$. 
Similarly, we can write 
$$r'= If(q-0, nextr^{0}, If(q-1, nextr^{1}, \cdots, If(q-|Q-2|, nextr^{|Q|-2},nextr^{|Q|-1})))$$
\olivier{
$$r'=  \sum_{u=0}^{|Q|-1} \left( \prod_{i=0}^{u-1} Cond(q-u) \right) \cdot (1-Cond(q-i)) \cdot next^{r},$$
}
where 
$nextr^{q}= If(v-0,nextr^{q}_{0},If(v-\symboleun,nextr^{q}_{\symboleun},nextr^{q}_{\symboledeux}))$
\olivier{$$next^{r}= \sum_{v \in \{0,\symboleun,\symboledeux\}} \left( \prod_{j \in \{0,\symboleun,\symboledeux\}, j<i}  Cond(r - \sigma(\base \bar r)-v) \right) \cdot (1-Cond(r - \sigma(\base \bar r)-j) \cdot next^{r}_{v}$$
}
and where $nextr^{q}_{v}$ that corresponds to the corresponding expression in the item above according to the value of $\delta(q,v)$.
We can clearly write a similar expression for $l'$.
These imbrications of $If$ rewrite to  some essentially linear expressions.

\end{proof}

Once we have one step, we can simulate some arbitrary computation of a Turing machine, using some linear length ODE:


\begin{proposition} \label{prop:deux} 
Consider some Turing machine $M$ that computes some function $f: \Sigma^{*} \to \Sigma^{*}$ in some time $T(\ell(\omega))$ on input $\omega$.  One can construct some function $\tilde{\tu f}: \N \times \R \to \R$ in $\manonclass$ that does the same, with respect to the previous encoding: We have $\tilde{\tu f}(2^{T(\ell(\omega))},\encodagemot(\omega))$ provides $f(\omega)$. 
\end{proposition}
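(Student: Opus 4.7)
My plan is to iterate $\bar{Next}$ by a linear length ODE and then project onto the component of the configuration encoding the output tape. Since $\bar{Next}$ is essentially linear in its argument by the previous lemma, the expression $\bar{Next}(\tilde{\tu F}) - \tilde{\tu F}$ is again essentially linear in $\tilde{\tu F}$ (a difference of two essentially linear expressions is essentially linear), so it is admissible as the right-hand side of a linear length ODE in the sense of Definition \ref{def:linear lengt ODE}.

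Concretely, I would define $\tilde{\tu F}$ by the linear length ODE
\[
\tilde{\tu F}(0, \bar\omega) = (q_{init}, 0, \bar\omega), \qquad \dderivl{\tilde{\tu F}(x, \bar\omega)} = \bar{Next}(\tilde{\tu F}(x, \bar\omega)) - \tilde{\tu F}(x, \bar\omega),
\]
where the initial value encodes the start configuration of $M$ on input $\omega$ (state $q_{init}$, empty left tape $0$, right tape $\bar\omega = \encodagemot(\omega)$). Unfolding the semantics of the schema gives
\[
\tilde{\tu F}(x+1, \bar\omega) - \tilde{\tu F}(x, \bar\omega) = (\ell(x+1) - \ell(x))\bigl(\bar{Next}(\tilde{\tu F}(x, \bar\omega)) - \tilde{\tu F}(x, \bar\omega)\bigr),
\]
and since $\ell(x+1) - \ell(x)$ equals $1$ when $x+1$ is a power of $2$ and $0$ otherwise, $\tilde{\tu F}$ stays constant between consecutive powers of $2$ and is updated by one application of $\bar{Next}$ at each boundary. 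Consequently $\tilde{\tu F}(2^N, \bar\omega) = \bar{Next}^{N+1}(q_{init}, 0, \bar\omega)$ for every $N \geq 0$.

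To match the statement, I would assume, without loss of generality after a standard padding of $M$, that halting configurations are fixed points of $\bar{Next}$. Then $\tilde{\tu F}(2^{T(\ell(\omega))}, \encodagemot(\omega))$ is the real encoding of the halting configuration of $M$ on $\omega$, whose right-tape component encodes $f(\omega)$ in the obvious way. Taking $\tilde{\tu f}$ to be the coordinate of $\tilde{\tu F}$ carrying this right-tape component yields $\tilde{\tu f}(2^{T(\ell(\omega))}, \encodagemot(\omega)) = \encodagemot(f(\omega))$, and $\tilde{\tu f}$ belongs to $\manonclass$ by a single use of the linear length ODE schema on top of the basic functions and of $\bar{Next}$.

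The main obstacle is checking that $\bar{Next}(\tilde{\tu F}) - \tilde{\tu F}$ literally fits the syntactic definition of essential linearity: one has to verify that in $\bar{Next}$, as constructed in the previous lemma via nested $If$'s on the state and read symbol, all branching subterms sit inside a $\signb{}$ (so they contribute degree $0$ in $\tilde{\tu F}$), while the tape-update parts are genuinely affine in $\bar l$ and $\bar r$. A secondary point to confirm is that the alphabet restriction $\{\symboleun, \symboledeux\}$ ensures the surrogates $i(\cdot)$ and $\sigma(\cdot)$ behave as true integer and fractional parts on every reachable configuration along the iteration, so that exactness is preserved throughout all $N+1$ applications of $\bar{Next}$.
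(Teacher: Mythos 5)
Your proposal is correct and follows essentially the same route as the paper: define the iterated-configuration function by a linear length ODE built from $\bar{Next}$ (the paper's $\bar{Exec}$), evaluate it at $2^{T(\ell(\omega))}$ on the encoded initial configuration $(q_{init},0,\encodagemot(\omega))$, and project onto the right-tape component. If anything, you are more careful than the paper's sketch, since writing the right-hand side as $\bar{Next}(\tilde{\tu F})-\tilde{\tu F}$ (so that each length increment performs exactly one application of $\bar{Next}$) and assuming halting configurations are fixed points to absorb the off-by-one are exactly the details the paper leaves implicit.
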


\begin{proof}
The idea is to define the function $\bar {Exec}$ that maps some time $2^{t}$ and some initial configuration $C$ to the configuration number at time $t$.  This can be obtained using some linear length ODE using previous Lemma.
$$
 \bar {Exec}(0,C) 
 =C \quad and \quad 
 \dderivl{\bar {Exec}}
  (t, C) 
 =\bar{Next}(\bar {Exec}(t,C)) 
 $$
	
We can then get the value of the computation as $\bar {Exec}(2^{T(\ell(\omega))}, C_{init})$
	on input $\omega$, considering $C_{init}=(q_{0},0,\encodagemot(\omega))$.
	By applying some projection, we get the following function
	$\tilde{\tu f}(x,y)= \projection{3}{3}(\bar {Exec}(x, q_{0},0,y))$ that satisfies the property.	\end{proof}
	
\section{Towards functions from integers to the reals}
\label{sec:computablereal}

The purpose of this section is to prove Theorem \ref{th:main:one}. 
\shortermcu{
\subsection{Reverse implication}
}
The reverse implication of Theorem \ref{th:main:one} mostly follows from Proposition  \ref{prop:mcu:un} and arguments from computable analysis. By lack of space, details are in appendix.

For the direct implication of  Theorem \ref{th:main:one}, the difficulty is that we know from previous section how to simulate Turing machines working over $\Image$,  while we want functions that work directly over the integers and over the reals.  A key is to be able to convert from integers/reals to representations using only symbols $\symboleun$ and $\symboledeux$, that is to say, to map integers to $\Image$, and $\Image$ to reals. 

\begin{lemma}[{From $\Image$ to $\R$}]  \label{lem:codage:manon}
We can construct some  function $\Encode: \N \times [0,1] \to \R$ in $\manonclass$ that  maps $\encodagemot(\overline{d})$ with $\overline{d} \in \{1,3\}^*$  to some real $d$. It is surjective over the dyadic, in the sense that for any dyadic $d \in \dyadic$, there is some (easily computable) such $\overline{d}$ with $\Encode(2^{\ell(\overline{d})},\overline{d})=d$.
\end{lemma}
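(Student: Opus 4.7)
The plan is to decode the word $\overline{d} \in \{1,3\}^{*}$ encoded in $r = \encodagemot(\overline{d})$ one base-$\base$ digit at a time, while simultaneously accumulating the corresponding dyadic value in a separate register. The iteration is driven by a linear length ODE so that $\ell(x)$ steps are performed at argument $x$; for $x = 2^k$ this is $k+1$ steps, but I will set things up so that the single surplus step is inert.

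First I build continuous digit-reading primitives in the style of those used to define $\bar{Next}$ in Proposition~\ref{prop:deux}. From $\signb$, multiplication, subtraction and constants I construct $i \in \manonclass$ with $i(0)=0$, $i(1)=1$, $i(3)=3$, piecewise affine in between, and set $\sigma(y) = y - i(y)$, a continuous fractional-part substitute that behaves correctly on the values in $\{0,1,3\}$. For bit extraction I use the simpler $b(y) := \signb{y - 2}$: it returns $0$ at $y \in \{0,1\}$ and $1$ at $y = 3$, which is exactly the encoding $b_i = (c_i - 1)/2$ (with the blank value $0$ also yielding bit $0$). A short computation shows that when $r = \encodagemot(c_1 \cdots c_k)$ with $c_j \in \{1,3\}$, the quantity $\base\,\sigma^{j-1}(\base \cdot)(r)$ stays safely outside the affine region of $\signb$, so $i$ and $b$ realize the intended integer operations exactly on every iterate that arises.

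I then introduce the vector-valued function $F(x,r) = (R(x,r), D(x,r), P(x,r))$ defined by the linear length ODE with $F(0,r) = (r, 0, 1)$ and
\begin{align*}
\dderivl{R(x,r)} &= 3\, R(x,r) - i(\base\, R(x,r)), \\
\dderivl{D(x,r)} &= b(\base\, R(x,r)) \cdot \tfrac{1}{2}\, P(x,r), \\
\dderivl{P(x,r)} &= -\tfrac{1}{2}\, P(x,r).
\end{align*}
One iteration sends $R \mapsto \sigma(\base R)$, adds the current bit weighted by $P/2$ into $D$, and halves $P$, so that $P(x) = 2^{-\ell(x)}$ throughout. Starting from $r = \encodagemot(c_1 \cdots c_k)$ with $c_i \in \{1,3\}$, after $k$ iterations $R$ has reached $0$ and $D$ equals $\sum_{i=1}^{k}\tfrac{c_i - 1}{2}\, 2^{-i} = d$. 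In the $(k{+}1)$-th iteration forced by $\ell(2^k) = k+1$, we have $\base R = 0$, so $i(0) = 0$ keeps $R$ at $0$ and $b(0) = 0$ leaves $D$ unchanged. I then define $\Encode(x,r)$ as the $D$-component of $F$, which thereby lies in $\manonclass$.

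The main delicate point is verifying that the system fits the essentially-linear schema of Definition~\ref{def:linear lengt ODE}. The expressions $i(\base R)$ and $b(\base R)$ are $\signb$-wrapped and hence essentially constant in $F$, so $3R - i(\base R)$ is essentially linear in $R$. The coupling term $b(\base R) \cdot P/2$, despite involving two coordinates of $F$, factors as $A_{23}(R) \cdot P$ with $A_{23} = b(\base R)/2$ essentially constant in $F$, so it contributes to the matrix $\tu A$ of the scheme without breaking essential linearity. Surjectivity over the dyadics is then immediate: given $d = \sum_{i=1}^{k} b_i\, 2^{-i} \in \dyadic$, set $c_i = 2 b_i + 1 \in \{1,3\}$ and $\overline{d} = c_1 \cdots c_k$; then $\Encode(2^{\ell(\overline{d})}, \encodagemot(\overline{d})) = d$, and the map $d \mapsto \overline{d}$ is obviously computable.
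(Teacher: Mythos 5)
Your machinery is essentially the paper's: iterate a base-$\base$ digit-reading map by a linear length ODE, replace integer/fractional part by $\signb$-built surrogates $i$ and $\sigma$ that are exact on the values actually reached, check that the coupling term $\signb{\base R-2}\cdot P/2$ enters only through an essentially constant matrix entry, and make the one surplus step inert. All of that is fine (and your explicit check of essential linearity, and the weight register $P$, are if anything more careful than the paper). The genuine gap is in what your $\Encode$ can reach. Your decoded value is $D=\sum_{i=1}^{k}\frac{c_i-1}{2}\,2^{-i}$, which always lies in $[0,1)$, whereas the lemma claims surjectivity over $\dyadic$, defined in the paper as \emph{all} dyadic rationals $m\cdot 2^{-n}$ with $m\in\Z$. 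So already $d=1$, and a fortiori $d=5.5$, is unreachable; your final surjectivity argument silently restricts to dyadics of the form $\sum_{i=1}^{k}b_i2^{-i}$. This is not a cosmetic restriction: the lemma is used in the proof of Theorem \ref{th:main:one} to turn the word output by the Turing machine into the dyadic approximation of $\tu f(\tu m)$ for an arbitrary polynomial-time $\tu f:\N^{d}\to\R^{d'}$, whose values are unbounded, so dyadics with nontrivial integer part are exactly the case that must be handled.

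The paper handles it by changing the alphabet encoding, not the ODE machinery: each binary digit of $d$ is encoded by a \emph{pair} of symbols of $\{1,3\}$, the first symbol recording whether the bit sits before or after the binary point, the second its value; the decoding map then consumes two base-$\base$ digits per step (whence the $16\,\overline{r_1}$ and the tests against $5,7,13,15$) and updates the accumulator by $l\mapsto 2l+b$ for integer-part bits and by a halving shift for fractional bits. Your construction can be repaired along the same lines — keep your $(D,P)$ registers for the fractional pairs and add a doubling update $l\mapsto 2l+b$ driven by the position-marker symbol for the integer-part pairs (your $P$-register idea even avoids having to reverse the fractional bits) — but as written the surjectivity over $\dyadic$, and hence the lemma as stated and as needed downstream, is not established. (Negative dyadics are not treated by the paper's encoding either, so that omission does not count against you.)
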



\begin{proof}
Consider the following transformation: Every digit in the binary expansion of $d$  is encoded by a pair of symbols in the radix $4$ encoding of $\overline{d} \in [0,1]$: digit $0$ (respectively: $1$) is encoded by $11$ (resp. $13$) if before the ``decimal'' point in $d$, and digit $0$ (respectively: $1$) is encoded by $31$ (resp. $33$) if after. For example, for $d=101.1$ in base $2$, $\overline{d}=0.13111333$ in base $4$. 


The transformation from $\overline{d}$ to $d$ can be done by considering a function $F: [0,1]^{2} \to [0,1]^{2}$ that satisfies
$$
F( \overline{r_1}, \overline{l_2}) = 
\left\{
\begin{array}{ll} (\sigma(16 \overline{r_1}), 2 \overline{l_2} + 0) &  \mbox{ whenever } i( 16 \overline{r_1})= 
5\\
(\sigma(16 \overline{r_1}), 2 \overline{l_2} + 1) &  \mbox{ whenever } i( 16 \overline{r_1})= 
7\\
 (\sigma(16 \overline{r_1}), (\overline{l_2} + 0)/2) &  \mbox{ whenever } i( 16 \overline{r_1})= 
 13\\
 (\sigma(16 \overline{r_1}), (\overline{l_2} + 1)/2) &  \mbox{ whenever } i( 16 \overline{r_1})= 
 15
\end{array}\right.
$$
A natural candidate for this is an expression such as 
$If(i(16 \overline{r_1})-0,(\sigma(16 \overline{r_1}), 2 \overline{l_2} + 0),
If(i(16 \overline{r_1})-7,(\sigma(16 \overline{r_1}), 2 \overline{l_2} + 1),
If(i(16 \overline{r_1})-13,(\sigma(16 \overline{r_1}), (\overline{l_2} + 0)/2),
(\sigma(16 \overline{r_1}),$ $(\overline{l_2} + 1)/2))))$
with  $\sigma$ and $i$ constructed as suitable approximation of the fractional and integer part as in previous section.

We provide more details and intuition on the proof of Lemma \ref{lem:codage:manon}:

To compute $d$, given $\overline{d}$, the intuition is to consider a two-tapes Turing machine $(Q, \Sigma, q_{init}, \delta, F)$ : the first tape contains the input ($\overline{d}$), and is read-only, the second one is write-only and empty at the beginning. We just use a different encoding on the second tape that the previous one: For the first tape, we do restrict to digits $0,\symboleun,\symboledeux$, while for the second, we use binary encoding.

Writing the natural Turing machine that does the transformation, this would basically do the  following (in terms of real numbers), if we forget the encoding of the internal state.

Here we write $\overline{ab}$ for the integer whose base $\base$ radix expansion is $ab$. 

This is how we got the function $F$ considered in the main part of the paper. Then the previous reasoning applies on the iterations of function $F$ that would provide some encoding function.

Concerning the missing details on the choice of function $\sigma$ and $i$. From the fact that we have only $\symboleun$ and $\symboledeux$ in $\overline{r}$,  the reasoning is valid as soon as $i(16 \overline{r})$ is correct for $16 \overline{r} \in \{\overline{11},\overline{13},\overline{31},\overline{33})$. So $i(x)=If(x-5,5,If(x-7,7,If(x-13,15))))$ works. Then take $\sigma(x)=x-i(x)$.

We then just need to apply $\ell(\overline{d})$th times $F$ on $(\overline{d},0)$, and then project on the second component to get a function $\Encode$ that does the job. That is $Encode(x,y)= \projection{3}{3}(G(x,y))$
with 
$$
 G(0,y) = (\overline{d},0) 
\quad and \quad 
 \dderivl{G}
  (t, \overline{d},\overline{l}) 
 =F(G(t,\overline{d},\overline{l})).
 $$
\end{proof}

\begin{lemma}[From $\N$ to $\Image$]  \label{lem:manquant} We can construct some function $\Decode: \N^{d} \to \R$ in $\manonclass$ that maps $n \in \N$ to some (easily computable) encoding of $n$ in $\Image$.
\end{lemma}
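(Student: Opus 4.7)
The plan is to realize $\Decode$ by a linear length ODE that writes out, one bit at a time, the base-$4$ expansion of the encoding of $n$. My main tool is the inclusion $\linearderivlength \subseteq \manonclass$: since $\signb$ agrees with $\sign$ on $\Z$, every discrete $\FPtime$ function is available in $\manonclass$, and in particular the bit-extraction function $\fonction{bit}(n,k)$, returning the $k$-th binary digit of $n\in\N$ (and $0$ for $k\ge\ell(n)$), belongs to $\manonclass$.

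The encoding I target is
\[
\Decode(n) \;=\; \sum_{k=0}^{\ell(n)-1} \frac{1 + 2\,\fonction{bit}(n,k)}{4^{\ell(n)-k}}.
\]
This is a dyadic number in $[0,1]$ whose base-$4$ digits are all $1$ or $3$: digit $1$ codes a $0$-bit of $n$ and digit $3$ codes a $1$-bit. Hence $\Decode(n)\in\Image$, and $n$ is trivially recoverable from the base-$4$ digits of $\Decode(n)$.

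I would then introduce an auxiliary $R(x,n)$ by the linear length ODE
\[
R(0,n) = 0, \qquad \dderivl{R(x,n)} \;=\; -\tfrac{3}{4}\,R(x,n) \;+\; \tfrac{1 + 2\,\fonction{bit}(n,\ell(x))}{4}.
\]
Unfolded, this says $R(x+1,n)=R(x,n)$ when $\ell(x+1)=\ell(x)$, and $R(x+1,n)=R(x,n)/4+(1+2\fonction{bit}(n,\ell(x)))/4$ otherwise; informally, \emph{shift right by one base-$4$ digit and prepend as new most significant digit the one coming from the next bit of $n$}. Because $\ell$ jumps exactly once per power of $2$, iterating from $x=0$ up to $x=n$ performs precisely $\ell(n)$ such updates, so $R(n,n)$ coincides with the target sum. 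I would then set $\Decode(n) := R(n,n)$, which is in $\manonclass$ by composing $R$ with projections.

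It remains to verify this fits Definition~\ref{def:linear lengt ODE}: the right-hand side is essentially linear in $R$, since its $R$-coefficient is the constant $-3/4$ and the inhomogeneous term depends on $x,n$ only through the auxiliary $\manonclass$-function $\fonction{bit}(n,\ell(x))$, which does not involve $R$. The rationals $1/4$, $3/4$ are built from $\mathbf{1}$ and the basic function $x/2$. The main obstacle is producing $\fonction{bit}(n,k)$ inside $\manonclass$, but this is already handled by the characterization $\linearderivlength=\FPtime$ on discrete functions, bit-selection being a standard polynomial-time integer operation. For the general statement with $d>1$, one first pads the binary expansions of $n_1,\dots,n_d$ to a common length and then concatenates the per-component encodings, still via a linear length ODE of exactly the same form.
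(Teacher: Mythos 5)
Your construction is correct and is essentially the paper's own proof: both lift discrete $\linearderivlength=\FPtime$ bit operations into $\manonclass$ by replacing $\sign{x}$ with $\signb{x}$, and then accumulate the resulting base-$4$ digits in $\{1,3\}$ through a linear length ODE whose update fires exactly $\ell(n)$ times. The only cosmetic difference is that you inject the digits via $\fonction{bit}(n,\ell(x))$ in the $\tu h(x,\tu y)$ slot of the schema, whereas the paper carries the $\overline{div_{2}}$-iterates of $n$ as an extra component of the ODE's state and reads each digit off with $\overline{mod_{2}}$.
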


\begin{proof}
We discuss only the case $d=1$ by lack of space. 
Let $div_{2}$ (respectively: $mod_{2}$)  denote integer (resp. remainder of) division by $2$: As these functions are from $\N \to \N$, from
Theorem \ref{th:ptime characterization 2} from \THEPAPIERS{}, they belongs to $\linearderivlength$.  Their expression in $\linearderivlength$, replacing $\sign$ by $\signb$, provides some extensions $\overline{div_{2}}$ and $\overline{mod{2}}$ in $\manonclass$.
We then do something similar as in the previous lemma but now with  function 
$$
F( \overline{r_1}, \overline{l_2}) = 
\left\{
\begin{array}{ll} 
(\overline{div_{2}}(\overline{r_1}), (\overline{l_2} + 0)/2) &  \mbox{ whenever } \overline{mod_{2}}( \overline{r_1})=0 \\
(\overline{div_{2}}(\overline{r_1}), (\overline{l_2} + 1)/2) &  \mbox{ whenever } \overline{mod_{2}}( \overline{r_1})=1. \\
\end{array}\right.
$$
\end{proof}

We can now  prove the direct direction of  Theorem \ref{th:main:one}: Assume that $\tu f: \N^{d} \to \R^{d'}$ is computable in polynomial time. That means that each of its components are, so we can consider without loss of generality that $d'=1$. We assume also that $d=1$ (otherwise consider either multi-tape Turing machines, or some suitable alternative encoding in $\Encode$).  That means that we know that there is a TM polynomial time computable functions $d: \N^{d+1} \to \{\symboleun,\symboledeux\}^{*}$ so that on $\tu m,n$ it provides the encoding of some dyadic  $\phi(\tu m,n)$ with $\|\phi(\tu m,n)-\tu f(\tu m)\| \le 2^{-n}$ for all $\tu m$. 

From Proposition \ref{prop:deux}, we can construct $\tilde{d}$ 
with $\tilde{d}(2^{p(max(\tu m,n))},\Decode(n,\tu m))=d(\tu m,n)$ for some polynomial $p$ 
corresponding to the time required to compute  $d$. 

Both functions $\length{\tu x}=\length{x_1}+ \ldots
+	\length{x_p}$ and $B(\tu x)=2^{\length{\tu x}\cdot \length{\tu x}}$ are in $\linearderivlength$ (see \THEPAPIERS). It is easily seen that : $\length{\tu x}^c\leq B^{(c)}(\length{\tu x}))$ where $B^{(c)}$ is the $c$-fold composition of function $B$.

Then   $\tilde{\tu f}(\tu m,n)=Encode(\tilde{d}( B^{(c)}(\max(\tu m,n)), \Decode(n,\tu m)))$  provides a solution such that
 $\|\tilde{\tu f}(\tu m,2^{n})-\tu f(\tu m)\| \le 2^{-n}.$

\section{Proving Theorems \ref{th:main:two} and \ref{th:main:twop}}
\label{sec:main:two}

Clearly Theorem \ref{th:main:twop} follows from the case where $d=1$ and $d'=1$ from Theorem \ref{th:main:two}.  Hence, there only remain to prove
Theorem \ref{th:main:two}. The direct direction is immediate from Theorem \ref{th:main:one}. 
For the reverse direction, by induction, the only thing to prove is that the class of functions from to the integers computable in polynomial time is preserved by the operation $\MANONlim$. Take such a function $\tilde{\tu f}$.  By definition, given $\tu m$, we can compute $\tilde{f}(\tu m, 2^n)$ with precision $2^{-n} $ in time polynomial in $n$. This must be by definition of $\MANONlim$ schema some approximation of $\tu f(\tu m)$, and hence $\tu f$ is computable in polynomial time.

%
%
%

%

\section{Generalizations}
\label{sec:generalizations}

\olivier{Mettre ici certaines genéralisations prouvées par Manon}

Recall that 
a function $M : \N \rightarrow \N$ is a modulus of convergence of $g: \N \to \R$, with $g(n)$ converging toward $0$   when $n$ goes to $\infty$,  if and only if for all $i>M(n)$, we have $\| g(i)   \| \le 2^{-n} $.
A function $M :\N \rightarrow \N$ is a uniform modulus of convergence of a sequence $g: \N^{d+1} \to \R$, with $g(\tu m,n)$ converging toward $0$ when $n$ goes to $\infty$  if and only if for all $i>M(n)$, we have $\| g(\tu m,i)  \| \le 2^{-n} $.
%
Intuitively, the modulus of convergence gives the speed of convergence of a sequence.

\begin{definition}[Operation $\MANONlimd$] Given $\tilde{\tu f}:\N^{d+1} \to \R \in \manonclass$, $g: \N^{d+1} \to \R$ such that
for all $\tu m \in \N^{d}$, $n \in \N$,
$\|\tilde{\tu f}(\tu m,2^{n}) - \tu f(\tu m) \| \le g(\tu m, n)$
under the condition that $0 \le g(\tu m, n)$ is decreasing to $0$, with $\| g(\tu m,p(n)) \| \le  2^{-n}$ for some polynomial $p(n)$
then 
$\MANONlimd(\tilde{\tu f},g)$ is the (clearly uniquely defined) corresponding function  $\tu f: \N^{d} \to \R^{e}$.
\end{definition}

\begin{theorem}
We could replace $\MANONlim$ by $\MANONlimd$ in \shortermcu{the statements of} Theorems \ref{th:main:two} and \ref{th:main:twop}.
\end{theorem}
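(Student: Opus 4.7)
The plan is to verify that both directions of Theorems \ref{th:main:two} and \ref{th:main:twop} still go through with $\MANONlim$ replaced by $\MANONlimd$. Let $\manonclasslim^{d}$ denote the class $[\mathbf{0},\mathbf{1},\projection{i}{k}, \length{x}, \plus, \minus, \times, \signb{x}, \frac{x}{2}; \text{composition}, \text{linear length ODE}; \MANONlimd]$ obtained by this replacement; the goal is to show that the components of a function $\tu f: \N^d \to \R^{d'}$ lie in $\manonclasslim^{d}$ if and only if $\tu f$ is polynomial time computable.

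For the direct inclusion, the key observation will be that $\MANONlim$ is just a specialization of $\MANONlimd$: given $\tilde{\tu f} \in \manonclass$ with $\|\tilde{\tu f}(\tu m, 2^n) - \tu f(\tu m)\| \le 2^{-n}$, the same $\tilde{\tu f}$ witnesses $\tu f = \MANONlimd(\tilde{\tu f}, g)$ with $g(\tu m, n) = 2^{-n}$ and polynomial $p(n) = n$ (the function $g$ is nonnegative, decreasing to $0$, and satisfies $g(\tu m, p(n)) = 2^{-n}$ trivially). Combined with Theorem \ref{th:main:two}, this immediately places every polynomial time computable $\tu f: \N^d \to \R^{d'}$ in $\manonclasslim^{d}$.

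For the reverse inclusion, I will redo the structural induction of the proof of Theorem \ref{th:main:two}; only the case of the new operation $\MANONlimd$ requires verification. Assume $\tilde{\tu f} \in \manonclass$ (polynomial time computable by Proposition \ref{prop:mcu:un}), together with a bound $g$ satisfying the hypotheses of $\MANONlimd$ and its associated polynomial $p$. To approximate $\tu f(\tu m) = \MANONlimd(\tilde{\tu f}, g)(\tu m)$ within $2^{-n}$, I will compute a dyadic approximation $q$ of $\tilde{\tu f}(\tu m, 2^{p(n+1)})$ at precision $2^{-(n+1)}$; the triangle inequality then yields
$$\|q - \tu f(\tu m)\| \le \|q - \tilde{\tu f}(\tu m, 2^{p(n+1)})\| + g(\tu m, p(n+1)) \le 2^{-(n+1)} + 2^{-(n+1)} = 2^{-n}.$$
Since $\ell(2^{p(n+1)}) = p(n+1)+1$ is polynomial in $n$ and $\tilde{\tu f}$ is polynomial time computable in the length of its integer arguments, the whole procedure runs in time polynomial in $n$ and the length of $\tu m$.

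The main obstacle is the delicate role played by $g$ and $p$ in the definition of $\MANONlimd$. The value of $\tu f$ is uniquely determined by $\tilde{\tu f}$ through its limit (the bound $g$ only certifies that this limit is approached at a polynomially controlled rate), so nothing is lost by treating $g$ as an external certificate rather than as an object of the class $\manonclasslim^{d}$. The algorithm in the reverse direction does need access to the polynomial $p$; this may either be viewed as implicit data of the schema, or else recovered by computing $\tilde{\tu f}(\tu m, 2^K)$ for successive $K = 1, 2, \ldots$ until two consecutive values agree up to $2^{-(n+1)}$, a process that terminates within $K \le p(n+1)$ steps by the polynomial convergence hypothesis. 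Either way the induction closes, and we obtain the analogues of Theorems \ref{th:main:two} and \ref{th:main:twop} with $\MANONlim$ replaced by $\MANONlimd$.
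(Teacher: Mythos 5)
Your proof is correct and follows essentially the same route as the paper, which reduces the statement to an equivalence (Theorem \ref{th:dix}) proved by exactly your two steps: the forward direction takes $g(\tu m,n)=2^{-n}$ with the identity as uniform modulus of convergence, and the reverse direction approximates $\tilde{\tu f}$ at index $p(n+1)$ to precision $2^{-(n+1)}$ and concludes by the triangle inequality, using that $\tilde{\tu f}\in\manonclass$ is polynomial time computable (Proposition \ref{prop:mcu:un}) and that $\ell(2^{p(n+1)})$ is polynomial in $n$. One caveat: your parenthetical alternative for recovering $p$ (stopping when two consecutive values of $\tilde{\tu f}(\tu m,2^{K})$ agree within $2^{-(n+1)}$) is unsound, since agreement of consecutive terms does not bound the distance to the limit; keep your primary reading, in which the polynomial $p$ is part of the data of the $\MANONlimd$ schema, as the paper implicitly does.
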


This is equivalent to prove the following,  and observe from the proof that we can replace in above statement ``$g(\tu m,n)$ going to $0$'' by ``decreasing to $0$'', and last condition by
$\| g(\tu m,p(n)) \| \le  2^{-n}$.


\begin{theorem}\label{th:dix} 
$\tu F: \N^{d} \to \R^{d'}$ is computable in polynomial time iff there exists $\tu f: \N^{d+1} \rightarrow \Q^{d'}$, with $\tu f(\tu m,n)$ computable in polynomial time \unaire{n}, 
and $g : \N^{d+1} \rightarrow \Q$ such that
	\begin{itemize}
		\item $\| \tu f(\tu m,n) - \tu F(\tu m) \| \leq g(\tu m,n) $
		\item $0  \le g(\tu m,n)$ and $g(\tu m,n)$ converging to $0$ when $n$ goes to $+\infty$,  
		\item with a uniform  polynomial modulus of convergence $p(n)$.%
%
	\end{itemize}  
\end{theorem}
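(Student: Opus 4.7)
\textbf{Proof Proposal for Theorem \ref{th:dix}.} The plan is to prove both implications directly from the definitions. All the heavy lifting (namely, the characterization of polynomial time computable functions via $\manonclass$ and $\MANONlim$) is already available; the role of this theorem is to show that the rigid error bound $2^{-n}$ built into the schema $\MANONlim$ can be relaxed to any rational error $g(\tu m,n)$ that admits a uniform polynomial speed of convergence, without changing the induced class.

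For the direct implication, suppose $\tu F$ is polynomial time computable. By Definition \ref{def:bonendroit}, there is an oracle Turing machine that, on input $\tu m$ and precision parameter $n$, produces a dyadic approximation $\tu f(\tu m,n) \in \Q^{d'}$ with $\|\tu f(\tu m,n)-\tu F(\tu m)\| \leq 2^{-n}$, in time polynomial in $\ell(\tu m)$ and in the value of $n$. Defining this $\tu f$ and taking $g(\tu m,n)=2^{-n}$ and $p(n)=n$ already satisfies all the requested conditions: $g$ is nonnegative, strictly decreasing to $0$, and $p$ is a uniform polynomial modulus of convergence since for every $i > p(n)$ we have $g(\tu m,i) = 2^{-i} \leq 2^{-n}$.

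For the reverse implication, assume we are given $\tu f$, $g$, and the uniform polynomial modulus $p$. To approximate $\tu F(\tu m)$ within $2^{-n}$, the algorithm first computes $N := p(n)+1$ (which takes time polynomial in the value of $n$), and then returns $\tu f(\tu m, N)$. By the uniform modulus hypothesis applied to $i = N$, we have
$$\|\tu f(\tu m,N) - \tu F(\tu m)\| \;\leq\; g(\tu m,N) \;\leq\; 2^{-n}.$$
Since $\tu f$ is polynomial time computable \unaire{n}, i.e.\ polynomial in $\ell(\tu m)$ and in the value of its integer second argument, and since $N$ is bounded by a polynomial in the value of $n$, the whole procedure runs in time polynomial in $\ell(\tu m)$ and in the value of $n$. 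This matches exactly Definition \ref{def:bonendroit}, hence $\tu F$ is polynomial time computable in the sense of computable analysis.

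The main subtlety lies in aligning two different complexity conventions: $\tu f$ is measured \unaire{n} (polynomial in the value of $n$), while $\tu F$ is measured with integer argument by length but precision parameter by value. It is precisely the polynomial character of the uniform modulus $p$ that allows the substitution $n \mapsto p(n)+1$ to stay inside the polynomial regime; without this, the reverse direction would fail. From the proof it is also transparent that the hypothesis ``$g(\tu m,n) \to 0$'' can be weakened to merely requiring $\|g(\tu m,p(n))\| \leq 2^{-n}$ for some polynomial $p$, since the argument uses only that single bound at the chosen $N$; this gives the variant with ``decreasing to $0$'' mentioned in the remark, and therefore the replacement of $\MANONlim$ by $\MANONlimd$ in Theorems \ref{th:main:two} and \ref{th:main:twop}.
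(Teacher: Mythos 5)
Your proposal follows essentially the same route as the paper: forward direction by taking $g(\tu m,n)=2^{-n}$ with the identity (or $p(n)=n$) as modulus, reverse direction by querying $\tu f$ at an index determined by the polynomial modulus and checking the budget. The one point where your argument is looser than the paper's is the step ``returns $\tu f(\tu m,N)$'' in the reverse direction. The paper's convention (stated when it introduces complexity for functions in $\Lesfonctionsquoi$) is that a $\Q^{d'}$-valued function is ``computable in polynomial time'' in the sense of computable analysis, i.e.\ one can only produce a dyadic \emph{approximation} of $\tu f(\tu m,N)$, not output its exact value. Under that reading your error analysis spends the entire budget $2^{-n}$ on $g(\tu m,N)$ and leaves nothing for the unavoidable approximation of the rational value itself, so the output could miss $\tu F(\tu m)$ by slightly more than $2^{-n}$.

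The fix is exactly what the paper does: aim at precision $2^{-(n+1)}$ on both contributions, i.e.\ take the index $N$ so that $g(\tu m,N)\le 2^{-(n+1)}$ (the paper uses $p(n+1)$; with the strict-inequality form of the modulus definition, $p(n+1)+1$ is the safe choice), approximate $\tu f(\tu m,N)$ by a dyadic $q$ at precision $2^{-(n+1)}$, and conclude by the triangle inequality
$\| q - \tu F(\tu m) \| \le \| q - \tu f(\tu m,N)\| + \| \tu f(\tu m,N)-\tu F(\tu m)\| \le 2^{-(n+1)}+2^{-(n+1)} \le 2^{-n}$.
Everything else in your write-up, including the remark that only the single bound $g(\tu m,p(n))\le 2^{-n}$ is used (hence the $\MANONlimd$ variant), matches the paper.
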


\begin{proof}
	$\Rightarrow$ If we assume that $\tu F$ is computable in polynomial time, we set $g(\tu m, n) = 2^{-n}$, and we take the identity as uniform modulus of convergence. 
	
	$\Leftarrow$ Given $\tu m$ and $n$, approximate $\tu f(\tu m,p(n+1)) \in \Q$ at precision $2^{-(n+1)}$ by some dyadic rational $q$ and output $q$. This can be done in polynomial time \unaire{n}. 
	We will then have 
	$$\begin{array}{lll}
	\| q - \tu F(\tu m) \| & \le & \| q- \tu f(\tu m,p(n+1)) \| + \| \tu f(\tu m,p(n+1)) - \tu F(\tu m) \| \\
	& \le &  2^{-(n+1)}+ g(\tu m,p(n+1)) \\
	& \le &  2^{-(n+1)}+  2^{-(n+1)} \le 2^{-n}.
	\end{array}$$

\end{proof}

From the proofs we also get a normal form theorem. In particular, 

\begin{theorem}[Normal form theorem]
Any function $f: \N^{d} \to \R^{d'}$ can be obtained from the class $\manonclasslim$ using only one schema $\MANONlim$ (or $\MANONlimd$).
\end{theorem}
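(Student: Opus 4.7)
The normal form theorem should follow by a direct synthesis of Theorems \ref{th:main:two} and \ref{th:main:one}, with no further machinery required. Here is the approach I would take.

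Starting from an arbitrary $\tu f \colon \N^{d} \to \R^{d'}$ in $\manonclasslim$, I would first invoke Theorem \ref{th:main:two} (applied componentwise) to conclude that $\tu f$ is polynomial-time computable in the sense of computable analysis. Next, I would apply the direct (``only if'') direction of Theorem \ref{th:main:one}, which produces a function $\tilde{\tu f} \colon \N^{d+1} \to \R^{d'}$ lying in $\manonclass$ such that
$\|\tilde{\tu f}(\tu m, 2^{n}) - \tu f(\tu m)\| \le 2^{-n}$
for every $\tu m \in \N^{d}$ and every $n \in \N$. But by the very definition of the $\MANONlim$ schema, this condition is literally the assertion that $\tu f = \MANONlim(\tilde{\tu f})$. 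Since $\tilde{\tu f}$ has been constructed inside $\manonclass$ --- a class defined without any recourse to the limit schema --- we obtain $\tu f$ via a single outermost application of $\MANONlim$ on top of a pure $\manonclass$-expression, which is the claim.

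For the $\MANONlimd$ variant, I would proceed in either of two clean ways: observe that $\MANONlim$ is the special case of $\MANONlimd$ corresponding to the choice $g(\tu m, n) = 2^{-n}$ with polynomial modulus $p(n) = n$, so that the very same $\tilde{\tu f}$ works with a single $\MANONlimd$ application; or replay the argument using Theorem \ref{th:dix} in place of Theorem \ref{th:main:one} to build $\tilde{\tu f}$ and $g$ directly.

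I do not foresee any genuine obstacle, because the substantive content --- the polynomial-time simulation of Turing machines inside $\manonclass$ together with the closure properties --- has already been carried out in Sections \ref{manonclassdansfptime}--\ref{sec:computablereal}. The present theorem is essentially a structural remark: the arbitrary nesting of $\MANONlim$ allowed by the definition of $\manonclasslim$ is redundant, since the approximant $\tilde{\tu f}$ furnished by Theorem \ref{th:main:one} already lies in $\manonclass$. The only mildly delicate point would be to make sure that the construction of $\tilde{\tu f}$ in the proof of Theorem \ref{th:main:one} (namely $\tilde{\tu f}(\tu m,n) = \Encode(\tilde{d}(B^{(c)}(\max(\tu m,n)), \Decode(n,\tu m)))$) is visibly built using only composition and linear length ODE from the basic functions, which is indeed the case.
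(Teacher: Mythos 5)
Your argument is correct and matches the paper's intent exactly: the paper states this theorem as an immediate consequence of the preceding proofs, and the intended justification is precisely your synthesis — Theorem \ref{th:main:two} gives polynomial-time computability, the direct direction of Theorem \ref{th:main:one} produces an approximant $\tilde{\tu f}$ built inside $\manonclass$ (hence without any limit schema), and the definition of $\MANONlim$ then yields $\tu f$ by a single outermost application, with the $\MANONlimd$ case handled by specializing $g(\tu m,n)=2^{-n}$. Nothing further is needed.
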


\section{Conclusion and future work}
\label{sec:conclusion}

In this article, we characterized the set of functions from the integer to the reals. As we already said, our aim in a future work is to characterize $\FPtime \cap \R^{\R} $ and not only $\FPtime \cap \R^{\N} $. This is clearly a harder task. In particular, a natural approach would be to consider some function $\Encode$ from $\R$ to $\Image$. Unfortunately, 
%
such a function $decode$ is necessarily discontinuous, hence not-computable, hence cannot be in  the class. 
The approach of \emph{mixing} of \cite{BCGH07} might provide a solution, even if the constructions there, based on (classical) continuous ODEs use deeply some closure properties of these functions that are not true for discrete ODEs. 

\newpage
\printbibliography

\newpage

\appendix

\section{Some results from \THEPAPIERS}

\subsection{Some general statements}

In order to be as self-contained as possible, we recall in this section some results and concepts from \THEPAPIERS. All the statements in this section are already present in \THEPAPIERS: We are just repeating them here in case this helps.  We provide some of the proofs, when they are not in the preliminary ArXiv version.  

As said in the introduction:

\begin{definition}[Discrete Derivative] The discrete derivative of
	$\tu f(x)$ is defined as $\Delta \tu f(x)= \tu f(x+1)-\tu
        f(x)$. We will also write $\tu f^\prime$ for
        $\Delta \tu f(x)$ to help readers not familiar with discrete differences to understand
	statements with respect to their classical continuous counterparts. 
	
\end{definition}

Several results from classical derivatives generalize
to the settings of discrete differences: this includes linearity of derivation $(a \cdot
f(x)+ b \cdot g(x))^\prime = a \cdot f^\prime(x) + b \cdot
g^\prime(x)$, formulas for products
and division such as
 $(f(x)\cdot g(x))^\prime =
 f^\prime(x)\cdot g(x+1)+f(x) \cdot g^\prime(x)= f(x+1)  g^\prime(x) +  f^\prime(x)  g(x)$. 
    Notice that, however, there is no simple equivalent of the chain rule. 
 
A fundamental concept is the following:

\begin{definition}[Discrete Integral]
	Given some function $\tu f(x)$, we write $$\dint{a}{b}{\tu f(x)}{x}$$
	as a synonym for $\dint{a}{b}{\tu f(x)}{x}=\sum_{x=a}^{x=b-1}
        \tu f(x)$ with the convention that it takes value $0$ when $a=b$ and
        $\dint{a}{b}{\tu f(x)}{x}=- \dint{b}{a}{\tu f(x)}{x}$ when $a>b$. 
\end{definition}

The telescope formula yields the so-called Fundamental Theorem of
Finite Calculus: 

\begin{theorem}[Fundamental Theorem of Finite Calculus]
	Let $\tu F(x)$ be some function.
	Then,
	$$\dint{a}{b}{\tu F^\prime(x)}{x}= \tu F(b)-\tu F(a).$$
      \end{theorem}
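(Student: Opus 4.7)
The plan is to unfold both sides using the definitions given in the excerpt and verify that a telescoping sum collapses to the claimed difference. Concretely, by the definition of the discrete integral, $\dint{a}{b}{\tu F^\prime(x)}{x} = \sum_{x=a}^{b-1} \tu F^\prime(x)$, and by the definition of the discrete derivative, $\tu F^\prime(x) = \tu F(x+1) - \tu F(x)$. Substituting the latter into the former yields
$$\dint{a}{b}{\tu F^\prime(x)}{x} = \sum_{x=a}^{b-1} \bigl(\tu F(x+1) - \tu F(x)\bigr),$$
which is a telescoping sum.

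I would handle the case $a \le b$ first, by induction on $b - a$. The base case $b = a$ is exactly the convention $\dint{a}{a}{\tu F^\prime(x)}{x} = 0 = \tu F(a) - \tu F(a)$ built into the definition. For the inductive step, assuming the result for $b - a = k$, split the sum at $b$:
$$\sum_{x=a}^{b} \bigl(\tu F(x+1) - \tu F(x)\bigr) = \sum_{x=a}^{b-1} \bigl(\tu F(x+1) - \tu F(x)\bigr) + \bigl(\tu F(b+1) - \tu F(b)\bigr).$$
Applying the induction hypothesis to the first term gives $\tu F(b) - \tu F(a)$, and adding the final term collapses to $\tu F(b+1) - \tu F(a)$, as required.

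For the case $a > b$, I would appeal to the antisymmetry convention $\dint{a}{b}{\tu F^\prime(x)}{x} = -\dint{b}{a}{\tu F^\prime(x)}{x}$. Applying the previously established case (since now $b < a$), we get $-(\tu F(a) - \tu F(b)) = \tu F(b) - \tu F(a)$, completing the proof.

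There is no real obstacle here: the result is essentially a restatement of the definition of $\Delta$ combined with the telescoping cancellation, and the only care required is to check the sign/boundary conventions in the three cases $a < b$, $a = b$, and $a > b$. The induction could equally well be replaced by a direct pairing argument writing out the sum explicitly and cancelling intermediate terms $\tu F(a+1), \tu F(a+2), \dots, \tu F(b-1)$.
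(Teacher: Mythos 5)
Your proof is correct and follows exactly the route the paper itself takes: the paper simply invokes the telescoping of $\sum_{x=a}^{b-1}(\tu F(x+1)-\tu F(x))$, which is what you spell out, with the boundary conventions handled explicitly. Nothing is missing.
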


A classical concept in discrete calculus is the one of falling
	power defined as $$x^{\underline{m}}=x\cdot (x-1)\cdot (x-2)\cdots(x-(m-1)).$$
	This notion is  motivated by the fact that it satisfies a derivative formula $
        (x^{\underline{m}})^\prime  = m \cdot x^{\underline{m-1}}$  similar to the classical
        one for powers in the continuous setting.
In a similar spirit, we 
introduce the concept of falling exponential. 


\begin{definition}[Falling exponential]
	Given some function $\tu U(x)$, the expression $\tu U$ to the
	falling exponential $x$,
	denoted by $\fallingexp{\tu U(x)}$, stands
        for  \begin{eqnarray*}
                \fallingexp{\tu U(x)} &=&
                                                                                (1+ \tu U^\prime(x-1)) \cdots
                                        (1+ \tu U^\prime(1)) \cdot (1+ \tu U^\prime(0))   \\
                                         &=&
                   \prod_{t=0}^{t=x-1} (1+ \tu U^\prime(t)),
                    \end{eqnarray*}
	with the convention that $\prod_{0}^{0}=\prod_{0}^{-1}=\tu {id}$, where $\tu
        {id}$ is the identity (sometimes denoted  $1$ hereafter).
\end{definition}

This is motivated by the remarks that 
	$2^x=\fallingexp{x}$, and
        that the discrete
	derivative of a falling exponential is given by
	$$\left(\fallingexp{\tu U(x)}\right )^\prime = \tu U^\prime(x) \cdot
	\fallingexp{\tu U(x)}$$
	%
      for all $x \in \N$.

\begin{lemma}[Derivation of an integral with parameters]  \label{derivationintegral}
   Consider $$\tu F(x) = \dint{a(x)}{b(x)} {\tu f(x,t)}{t}.$$
   Then \begin{eqnarray*}
          \tu F'(x) &=& \dint{a(x)}{b(x)}{  \frac{\partial \tu f}{\partial
                        x} (x,t)}{t} 
                     + \dint{0}{-a^\prime(x)}{\tu f(x+1,a(x+1)+t)}{t} 
+ \dint{0}{b'(x)}{ \tu f(x+1,b(x)+t ) } {t}. 
\end{eqnarray*}

\noindent In particular, when $a(x)=a$ and $b(x)=b$ are constant functions, $\tu F'(x) = \dint{a}{b}{  \frac{\partial \tu f}{\partial
     x} (x,t)}{t},$
     and when $a(x)=a$ and $b(x)=x$,
     $\tu F'(x) = \dint{a}{x}{  \frac{\partial \tu f}{\partial
     x} (x,t)}{t} + \tu f(x+1,x)$.

\end{lemma}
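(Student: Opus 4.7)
The plan is to work directly from the definition $\tu F'(x) = \tu F(x+1) - \tu F(x)$ and reorganize the telescoping. Writing out,
$$\tu F'(x) = \sum_{t=a(x+1)}^{b(x+1)-1} \tu f(x+1,t) \;-\; \sum_{t=a(x)}^{b(x)-1} \tu f(x,t),$$
I would add and subtract $\sum_{t=a(x)}^{b(x)-1} \tu f(x+1,t)$ so as to split $\tu F'(x)$ into an \emph{inner} piece, capturing the change of $\tu f$ in its first argument, and a \emph{boundary} piece, capturing the change of the summation range. The inner piece becomes $\sum_{t=a(x)}^{b(x)-1}[\tu f(x+1,t) - \tu f(x,t)] = \dint{a(x)}{b(x)}{\frac{\partial \tu f}{\partial x}(x,t)}{t}$ directly from the definition of the discrete derivative and discrete integral.

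Next, I would handle the boundary piece $\sum_{t=a(x+1)}^{b(x+1)-1}\tu f(x+1,t) - \sum_{t=a(x)}^{b(x)-1}\tu f(x+1,t)$ by splitting further, adding and subtracting $\sum_{t=a(x+1)}^{b(x)-1} \tu f(x+1,t)$ (assuming for the moment $a(x+1) \le b(x)$, to be revisited). This yields a $b$-boundary term $\sum_{t=b(x)}^{b(x+1)-1}\tu f(x+1,t)$ and an $a$-boundary term $-\sum_{t=a(x)}^{a(x+1)-1}\tu f(x+1,t)$. The reindexing $t = b(x)+s$ in the first rewrites it as $\dint{0}{b'(x)}{\tu f(x+1,b(x)+s)}{s}$. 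For the second, setting $t = a(x+1)+s$ gives $-\sum_{s=-a'(x)}^{-1}\tu f(x+1,a(x+1)+s)$, which by the sign convention $\dint{\alpha}{\beta}{\cdot}{\cdot} = -\dint{\beta}{\alpha}{\cdot}{\cdot}$ is exactly $\dint{0}{-a'(x)}{\tu f(x+1,a(x+1)+s)}{s}$.

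The main obstacle I expect is not any calculation per se but checking that the stated formula holds uniformly regardless of the \emph{signs} of $a'(x)$ and $b'(x)$: the ``splittings'' above are intuitive when $a(x) \le a(x+1) \le b(x) \le b(x+1)$, but when $a'(x)$ or $b'(x)$ is negative, or when the intervals overlap in the ``wrong'' order, the sum manipulations require careful bookkeeping. The cleanest way I would handle this is to observe that the entire identity is bilinear and telescoping in the convention $\dint{\alpha}{\beta}{\tu g(t)}{t} = \sum_{t=\alpha}^{\beta-1}\tu g(t)$ extended antisymmetrically, so that the additivity formula $\dint{\alpha}{\gamma}{\tu g}{t} = \dint{\alpha}{\beta}{\tu g}{t} + \dint{\beta}{\gamma}{\tu g}{t}$ holds for \emph{all} integer triples $\alpha,\beta,\gamma$. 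Granting this additivity, the derivation above goes through verbatim without case analysis.

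The two particular cases stated (constant $a,b$ and $a$ constant, $b(x)=x$) then follow immediately: when $a'(x)=0$ and $b'(x)=0$, both boundary integrals are empty; and when $a'(x)=0$, $b'(x)=1$, the $b$-boundary term reduces to the single summand $\tu f(x+1,b(x)) = \tu f(x+1,x)$.
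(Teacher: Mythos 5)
Your proposal is correct and follows essentially the same route as the paper's proof: expand $\tu F'(x)=\tu F(x+1)-\tu F(x)$, add and subtract $\sum_{t=a(x)}^{b(x)-1}\tu f(x+1,t)$ to separate the inner term from the two boundary terms, and reindex each boundary sum. Your explicit appeal to the antisymmetric convention and additivity of the discrete integral to avoid case analysis on the signs of $a'(x)$ and $b'(x)$ is a slightly more careful rendering of what the paper handles implicitly through its sign convention for reversed bounds.
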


 \begin{proof}
\begin{eqnarray*}
  \tu F'(x) &=& \tu F(x+1) - \tu F(x) \\
  &=& \sum_{t= a(x+1)}^{b(x+1) - 1} \tu f(x+1,t) -
                          \sum_{t= a(x)}^{b(x) - 1} \tu f(x,t)   \\
  &=& \sum_{t= a(x)}^{b(x)-1} \left( \tu  f(x+1,t) - \tu f(x,t)
      \right)    
     +   \sum_{t=a(x+1)} ^{t= a(x)-1} \tu f(x+1,t) 
                      + \sum_{t=b(x)}
      ^{b(x+1)-1} \tu f(x+1,t) \\
  &=& \sum_{t= a(x)}^{b(x)-1}  \frac{\partial \tu f}{\partial
      x} (x,t)  + \sum_{t=a(x+1)} ^{t= a(x)-1} \tu f(x+1,t) + \sum_{t=b(x)}
      ^{b(x+1)-1} \tu f(x+1,t) \\
  &=& \sum_{t= a(x)}^{b(x)-1}   \frac{\partial \tu f}{\partial
      x} (x,t)  + \sum^{t=-a(x+1)+a(x)-1} _{t=0} \tu f(x+1,a(x+1)+t) \\
      &&
  + \sum_{t=0}
      ^{b(x+1)-b(x)-1} \tu  f(x+1,b(x)+t).
\end{eqnarray*}
\end{proof}

\begin{lemma}[Solution of linear ODE
	] \label{def:solutionexplicitedeuxvariables}
	For matrices $\tu A$ and vectors $\tu B$ and $\tu G$,
	the solution of equation $\tu f^\prime(x,\tu y)= \tu A(\tu f(x,\tu y),\tu h(x, \tu y), x,\tu y) \cdot \tu f(x,\tu y)
	+  \tu
	B (\tu f(x,\tu y), \tu h(x, \tu y),  x,\tu y)$  with initial conditions $\tu f(0,\tu y)= \tu G(\tu y)$ is
	\begin{eqnarray*}\label{soluce}
        \tu f(x,\tu y)  &=&
          \left( \fallingexp{\dint{0}{x}{\tu
                             A(\tu f(t,\tu y),\tu h(t, \tu y), t,\tu y)}{t}} \right) \cdot \tu G (\tu
                             y)  \\
           &&
          +
	\dint{0}{x}{ \left(
		\fallingexp{\dint{u+1}{x}{\tu A(\tu f(t,\tu y),\tu h(t, \tu y), t,\tu y)}{t}} \right) \cdot
              \tu B(\tu f(u,\tu y),\tu h(u, \tu y), u,\tu y)} {u}.
          \end{eqnarray*}

        \end{lemma}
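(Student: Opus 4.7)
The plan is to verify the explicit formula directly by induction on $x$, using the recurrence interpretation of the ODE. Observe that $\tu f'(x,\tu y) = \tu f(x+1,\tu y) - \tu f(x,\tu y)$, so the equation $\tu f'(x,\tu y) = \tu A(\cdots)\cdot \tu f(x,\tu y) + \tu B(\cdots)$ is equivalent to the linear recurrence
$$\tu f(x+1,\tu y) = \bigl(I + \tu A(\tu f(x,\tu y),\tu h(x,\tu y),x,\tu y)\bigr)\cdot \tu f(x,\tu y) + \tu B(\tu f(x,\tu y),\tu h(x,\tu y),x,\tu y),$$
together with the initial condition $\tu f(0,\tu y)=\tu G(\tu y)$. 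Since this recurrence determines $\tu f$ uniquely, it suffices to check that the right-hand side of the claimed formula satisfies it.

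First I would handle the base case $x=0$: the discrete integral $\dint{0}{0}{\cdots}{t}$ vanishes and the outer discrete integral from $0$ to $0$ is also zero, so by definition the falling exponential reduces to the identity (empty product) and the whole expression collapses to $\tu G(\tu y)$, as required.

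For the inductive step, I would use the telescoping identity for the falling exponential that follows immediately from its definition as a product: if $\tu U(x) = \dint{a}{x}{\tu A(\cdots)}{t}$, then by the fundamental theorem of finite calculus $\tu U'(x)=\tu A(\cdots,x,\cdots)$, and hence
$$\fallingexp{\dint{a}{x+1}{\tu A(\cdots)}{t}} \;=\; \bigl(I + \tu A(\tu f(x,\tu y),\tu h(x,\tu y),x,\tu y)\bigr)\cdot \fallingexp{\dint{a}{x}{\tu A(\cdots)}{t}}.$$
Applying this telescoping identity with $a=0$ to the first term of the formula, and with $a=u+1$ inside the discrete integral in the second term, then splitting off the $u=x$ summand (which equals $\tu B(\tu f(x,\tu y),\tu h(x,\tu y),x,\tu y)$ since the falling exponential of an empty-range integral is the identity), yields exactly $\bigl(I+\tu A(\cdots)\bigr)\cdot\tu f(x,\tu y) + \tu B(\cdots)$ where $\tu f(x,\tu y)$ denotes the formula evaluated at $x$.

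The main subtlety, rather than a real obstacle, is purely notational: the matrices $\tu A$ and vectors $\tu B$ inside the products/sums depend on $\tu f(t,\tu y)$ itself, so the formula is implicit. However, since the induction hypothesis gives us the value of $\tu f(t,\tu y)$ for all $t\leq x$, the inputs to $\tu A$ and $\tu B$ at stages $t\leq x$ are already determined, and the telescoping step only introduces a new factor at $t=x$, so no circularity arises. The argument is thus essentially the standard verification that the variation-of-constants formula solves a linear recurrence, transcribed into the discrete-ODE notation of the paper.
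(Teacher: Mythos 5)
Your proof is correct, and in substance it does the same thing the paper does --- verify that the candidate expression satisfies the defining equation together with the initial condition --- but the route through the verification is different. The paper names the right-hand side $\tu {rhs}(x,\tu y)$, computes its discrete derivative inside the ODE calculus (linearity, the derivative formula $\left(\fallingexp{\tu U(x)}\right)^\prime = \tu U^\prime(x)\cdot \fallingexp{\tu U(x)}$, and Lemma \ref{derivationintegral} on derivation of an integral with parameters, whose boundary term is exactly your split-off $u=x$ summand $\tu B(\tu f(x,\tu y),\tu h(x,\tu y),x,\tu y)$), and then concludes by unicity of solutions of the discrete ODE after checking $\tu {rhs}(0,\tu y)=\tu G(\tu y)$. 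You instead unroll the falling exponentials into their product form (essentially formula \eqref{eq:rq:fund} of Remark \ref{rq:fund}) and run an induction on $x$ via the telescoping identity, with the new factor $(I+\tu A)$ correctly multiplied on the left, which matters since $\tu A$ is a matrix. Your version is more elementary and self-contained: it needs no parameterized-integral derivation lemma, and the induction replaces the explicit appeal to uniqueness. The paper's computation, on the other hand, stays within the discrete-ODE formalism, reusing lemmas it needs anyway and keeping the argument visibly parallel to the continuous variation-of-constants proof. Your remark on the apparent circularity (the arguments of $\tu A$ and $\tu B$ involve $\tu f$ itself, but only at stages $t\le x$ already fixed by the induction hypothesis) is precisely the point on which the paper's uniqueness argument also implicitly relies, so nothing is missing.
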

                
\begin{proof}
Denoting the right-hand side by $\tu {rhs}(x,\tu y)$, we  have 
$$\begin{array}{ccl} \bar {\tu {rhs}}^{\prime}(x,\tu y)&=& \tu A(\tu f(x,\tu y),\tu h(x, \tu y), x,\tu y)  \cdot  \left( \fallingexp{\dint{0}{x}{\tu
                             A(\tu f(t,\tu y),\tu h(t, \tu y),  t,\tu y)}{t}} \right)   \cdot \tu G (\tu
                             y) \\
                                     && +
                             \dint{0}{x}{ \left(
		\fallingexp{\dint{u+1}{x}{\tu A(\tu f(t,\tu y), \tu h(t, \tu y),  t,\tu y)}{t}} \right)^{\prime} \cdot
              \tu B(\tu f(u,\tu y), \tu h(u, \tu y), u,\tu y)} {u} \\ 
              && +
                             \left(
		\fallingexp{\dint{x+1}{x+1}{\tu A(\tu f(t,\tu y),\tu h(t, \tu y), t,\tu y)}{t}} \right) \cdot
              \tu B(\tu f(x,\tu y), \tu h(x, \tu y), x,\tu y) \\
              &=& \tu A(\tu f(x,\tu y),\tu h(x, \tu y), x,\tu y)  \cdot  \left( \fallingexp{\dint{0}{x}{\tu
                             A(\tu f(t,\tu y),\tu h(t, \tu y),  t,\tu y)}{t}} \right)   \cdot \tu G (\tu
                             y)  \\ && + \  \tu A(\tu f(x,\tu y), \tu h(x, \tu y), x,\tu y) \cdot \\
                             && \dint{0}{x}{ \left(
		\fallingexp{\dint{u+1}{x}{\tu A(\tu f(t,\tu y),\tu h(t, \tu y), t,\tu y)}{t}} \right)  \tu B(\tu f(u,\tu y), \tu h(u, \tu y), u,\tu y)} {u} \\
                    &&
              + \ \tu B(\tu f(x,\tu y),\tu h(x, \tu y), x,\tu y)
              \\
                            &=&  \tu A(\tu f(x,\tu y), \tu h(x, \tu y), x,\tu y) \cdot \tu {rhs}(x,\tu y) 	+  \tu
	B (\tu f(x,\tu y),\tu h(x, \tu y), x,\tu y)
	              \end{array} 
              $$
              where we have used linearity of derivation and definition of falling exponential for the first term, and derivation of an integral (Lemma \ref{derivationintegral}) providing the other terms to get the first equality, and then the definition of falling exponential.
              This proves the property by unicity of solutions of a discrete ODE, observing that $\bar {\tu {rhs}}(0,\tu y)=\tu G(\tu y)$.
\end{proof}

We write also $1$ for the identity. 

        \begin{remark} \label{rq:fund}
          Notice that this can  be rewritten  as 
          \begin{equation} \label{eq:rq:fund} 
\tu f(x,\tu y)=\sum_{u=-1}^{x-1}  \left(
\prod_{t=u+1}^{x-1} (1+\tu A(\tu f(t,\tu y), \tu h(t, \tu y),  t,\tu y)) \right) \cdot  \tu B(\tu f(u,\tu y), \tu h(u, \tu y), u,\tu y)
,
\end{equation}
with the (not so usual) conventions that for any function $\kappa(\cdot)$,  $\prod_{x}^{x-1} \tu \kappa(x) = 1$ and $\tu
B(-1,\tu y)=\tu G(\tu y)$.
Such equivalent expressions both have a clear computational content. They can
be interpreted as an algorithm unrolling
the computation of   $\tu f(x+1,\tu y)$ from the computation of  $\tu
f(x,\tu y), \tu f(x-1,\tu y), \ldots, \tu f(0,\tu y)$.  
        \end{remark}

A fundamental fact is that the derivation with respects to length provides a way to so a kind of change of variables:
\olivier{Le dire mieux}

\begin{lemma}[Alternative view, case of Length ODEs] \label{fundob}
Let 
$f: \N^{p+1}\rightarrow \Z^d$,
$\lengt:\N^{p+1}\rightarrow \Z$  be some functions and assume that \eqref{lode} holds considering  $\lengt(x,\tu y)=\length{x}$.
Then $\tu f(x,\tu y)$ is given by 
$\tu f(x,\tu y)= \tu F(\length{x},\tu y)$
where $\tu F$ is the solution of initial value problem
\begin{eqnarray*}
\tu F(1,\tu y)&=& \tu f(0,\tu y), \\
\dderiv{\tu F(t,\tu y)}{t} &=&  
\tu h(\tu F(t, \tu y),2^{t}-1,\tu y).
\end{eqnarray*}

\end{lemma}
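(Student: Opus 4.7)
The plan is to carry out the change of variable $t=\length{x}$ explicitly and then invoke uniqueness of solutions of a discrete ODE. My first step is to unroll \eqref{lode} with $\lengt(x,\tu y)=\length{x}$ into the pointwise recurrence
$$
\tu f(x+1,\tu y) \;=\; \tu f(x,\tu y) \;+\; \bigl(\length{x+1}-\length{x}\bigr)\cdot \tu h(\tu f(x,\tu y),x,\tu y),
$$
and to record the combinatorial fact that $\length{x+1}-\length{x}\in\{0,1\}$, taking the value $1$ exactly when $x+1$ is a power of two, i.e.\ when $x=2^{t}-1$ for some integer $t\geq 1$. Consequently $\tu f(\cdot,\tu y)$ is constant on each \emph{length plateau} $P_{t}=\{x:\length{x}=t\}$; under the convention $\length{0}=\length{1}=1$ these plateaus are $P_{1}=\{0,1\}$ and $P_{t}=\{2^{t-1},\ldots,2^{t}-1\}$ for $t\geq 2$.

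I would then define the auxiliary sequence $\tilde{\tu F}(t,\tu y):=\tu f(2^{t}-1,\tu y)$ — the common value of $\tu f(\cdot,\tu y)$ on $P_{t}$ — so that by construction $\tu f(x,\tu y)=\tilde{\tu F}(\length{x},\tu y)$ for every $x\in\N$. It then suffices to verify that $\tilde{\tu F}$ satisfies the same initial value problem as $\tu F$. The initial condition is immediate: $\tilde{\tu F}(1,\tu y)=\tu f(1,\tu y)=\tu f(0,\tu y)$, since the first recurrence step carries a zero length-jump. For the recursion step, I would compute
$$
\tilde{\tu F}(t+1,\tu y)-\tilde{\tu F}(t,\tu y) \;=\; \tu f(2^{t},\tu y)-\tu f(2^{t}-1,\tu y) \;=\; 1\cdot \tu h\bigl(\tu f(2^{t}-1,\tu y),\,2^{t}-1,\,\tu y\bigr) \;=\; \tu h\bigl(\tilde{\tu F}(t,\tu y),\,2^{t}-1,\,\tu y\bigr),
$$
using that the unique length-jump across $P_{t}\cup P_{t+1}$ occurs at $x=2^{t}-1$. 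This matches the claimed ODE for $\tu F$ exactly, and a one-line induction on $t$ yields $\tilde{\tu F}=\tu F$, hence $\tu f(x,\tu y)=\tu F(\length{x},\tu y)$.

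There is no conceptually hard step; the lemma is really the discrete realisation of the chain-rule identity $\tfrac{\partial f}{\partial x}=\tfrac{\partial \ell}{\partial x}\cdot\tfrac{\partial f}{\partial \ell}$ highlighted in the remark after \eqref{lode}, and the whole argument rests on the $\{0,1\}$-valued nature of $\length{x+1}-\length{x}$. The only genuine pitfall is the bookkeeping around $x=0$ and the pairing of the index $t$ with the correct representative $2^{t}-1$ of the plateau $P_{t}$ — easy to misstate but immediate once noticed.
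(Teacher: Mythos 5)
Your proof is correct, and since the paper does not actually prove Lemma \ref{fundob} here — it is restated from \THEPAPIERS{} with only a pointer to those papers — your plateau-by-plateau change of variables (noting $\length{x+1}-\length{x}\in\{0,1\}$ with jumps exactly at $x=2^{t}-1$, $t\ge 1$, defining $\tilde{\tu F}(t,\tu y)=\tu f(2^{t}-1,\tu y)$, and checking it satisfies the stated initial value problem) is precisely the intended, standard argument. The only point worth keeping explicit is the one you already flagged: the stated initial condition $\tu F(1,\tu y)=\tu f(0,\tu y)$ is consistent exactly with the convention $\length{0}=\length{1}=1$ (no length-jump at $x=0$); under the convention $\length{0}=0$ the initial condition would instead have to be $\tu F(1,\tu y)=\tu f(1,\tu y)$, so making the convention explicit is the right call.
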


\olivier{Preuve de ca ou pas? Si pas preuve, donner ref}

\olivier{
\subsection{Results for functions over the integers}

The results established in this section were obtained in \THEPAPIERS for functions over the integers, and polynomial time computability was in the classical sense (as only functions over the integers were considered). 

The point is that we need generalization of them in this article, as we consider functions possibly reals, and (polynomial time) computability is in the sense of computable analysis.

We write $\norm{\cdots}$ for the sup norm: given some matrix $\tu
A=(A_{i,j})_{1 \le i \le n, 1 \le j \le m}$, 
$\norm{\tu A}=\max_{i,j}
A_{i,j}$.

\olivier{Nouvelle version d'Arnaud dans journal.tex}

	\begin{lemma}[Fundamental observation] \label{fundamencore}
	Consider the ODE 
	\begin{equation} \label{eq:bc}
	\tu f^\prime(x,\tu y)=  {\tu A} ( \tu f(x,\tu y), \tu h(x,\tu y),
	x,\tu y) \cdot
	  \tu f(x,\tu y)
	  +   {\tu B} ( \tu f(x,\tu y), \tu h(x,\tu y),
	  x,\tu y).
	\end{equation}
	\textbf{over the integers}.

	Assume:
	\begin{enumerate}
	\item The initial condition $\tu G(\tu y) = ^{def}
	  \tu f(0, \tu y)$, as well as $\tu h(x,\tu y)$ are polynomial time computable in $x$ and the length of   $\tu y$. 
	  \item ${\tu A} ( \tu f(x,\tu y), \tu h (x,\tu y),
	  x,\tu y)$ and ${\tu B} ( \tu f(x,\tu y), \tu h(x,\tu y),
	  x,\tu y)$ are \polynomial{} expressions essentially constant in $\tu f(x,\tu y)$.
	
	%
	\end{enumerate}
	
	Then, there exists a polynomial $p$ such that $\length{\tu f(x,\tu y)}\leq p(x,\length{\tu y})$ and $\tu f(x, \tu y)$ is polynomial time computable in $x$ and the length
	of $\tu y$. 
	\end{lemma}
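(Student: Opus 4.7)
The plan is to exploit the explicit linear-ODE solution of Lemma \ref{def:solutionexplicitedeuxvariables} (equivalently, to evaluate the recurrence iteratively) and show simultaneously that (i) the bit-length of $\tu f(x, \tu y)$ grows polynomially in $x, \length{\tu y}$, and (ii) each iteration can be carried out in polynomial time. The two facts must be established together by induction on $x$, because the time bound relies on the length bound.

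For the length bound, the key observation is the essentially-constant-in-$\tu f$ hypothesis: since $\tu A$ and $\tu B$ are \polynomial{} expressions with $\tu f$ occurring only inside sign subterms (which contribute $0$ to the degree), their values $\tu A(\tu f(t, \tu y), \tu h(t, \tu y), t, \tu y)$ and $\tu B(\cdots)$ have bit-length bounded by a polynomial in $\length{\tu h(t, \tu y)}$, $\length{t}$, $\length{\tu y}$, independent of $\length{\tu f(t, \tu y)}$. By hypothesis $\length{\tu h(t, \tu y)}$ is itself polynomial in $t$ and $\length{\tu y}$. Using the recurrence $\tu f(t+1, \tu y) = \tu f(t, \tu y) + \tu A(\cdots) \cdot \tu f(t, \tu y) + \tu B(\cdots)$, the length grows by at most a polynomial additive term per step, so after $x$ steps it remains polynomial in $x$ and $\length{\tu y}$. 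For the time, computing $\tu f(x, \tu y)$ iteratively from $\tu f(0, \tu y) = \tu G(\tu y)$ requires at each step evaluating $\tu h(t, \tu y)$ (polynomial time by hypothesis), evaluating $\tu A$ and $\tu B$ as \polynomial{} expressions on polynomially-bounded inputs (polynomial time), and performing a handful of arithmetic operations on polynomially-bounded integers. The $x$ iterations together therefore take time polynomial in $x$ and $\length{\tu y}$.

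The main obstacle --- and the reason for the essentially-constant-in-$\tu f$ hypothesis on $\tu A$ --- is to rule out exponential blow-up of intermediate values. Without this restriction, $\tu A$ could depend polynomially on $\tu f$, so each iteration could multiply $\length{\tu f(t, \tu y)}$ by a polynomial factor, yielding lengths of order $x^{\Omega(x)}$ after $x$ steps and ruining polynomial-time computability. The linearity of the schema in $\tu f$ is precisely what constrains the length growth to be additive rather than multiplicative at each step, which is what makes the iterative evaluation viable.
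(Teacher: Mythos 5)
Your proof is correct and follows essentially the same route as the paper's: iterate the recurrence $\tu f(t+1,\tu y)=(1+\tu A)\cdot\tu f(t,\tu y)+\tu B$, use the essentially-constant-in-$\tu f$ hypothesis to bound the values of $\tu A$ and $\tu B$ polynomially in $t$ and $\length{\tu y}$ so that the length of $\tu f$ grows only additively per step, and then feed this polynomial length bound into the step-by-step evaluation to get time polynomial in $x$ and $\length{\tu y}$. The only cosmetic difference is that the paper extracts the length bound from the explicit closed-form solution of the linear ODE (Lemma \ref{def:solutionexplicitedeuxvariables} and the sum-of-products formula of Remark \ref{rq:fund}) rather than by direct induction on the recurrence, but the computational content is the same.
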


	}

\olivier{Pas envie de dire ca, non?

The previous statements lead to the following:

\begin{lemma}[Key Observation for linear $\lengt$-ODE]~\label{lem:fundamentalobservationlinearlengthODE}
	Assume that $\tu f$ \textbf{over the integers}  is the solution of \eqref{SPLode} and that functions $\tu g, \tu h, \lengt$ and $Jump_\lengt$ are computable in polynomial time. 
	Then, $\tu f$ 
	is computable in polynomial time (\textbf{in the classical sense}).
\end{lemma}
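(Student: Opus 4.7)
The plan is to follow the same two-step strategy used in the proof of Lemma \ref{lem:un}, but now interpreted in the classical computability setting over the integers (where the complexity parameter is the length of the arguments, not their magnitude). The first step is to reduce a length-ODE to an ordinary discrete ODE, and the second is to bound the size of the solution and then evaluate it by iteration.

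First I would invoke Lemma \ref{fundob} to perform the change of variable $t = \length{x}$, obtaining $\tu f(x, \tu y) = \tu F(\length{x}, \tu y)$ where $\tu F$ satisfies a standard discrete ODE of the form
\[
\tu F(1,\tu y) = \tu g(\tu y), \qquad
\dderiv{\tu F(t,\tu y)}{t} = \tu u\bigl(\tu F(t,\tu y),\tu h(2^t-1,\tu y),2^t-1,\tu y\bigr),
\]
where the right-hand side is essentially linear in $\tu F(t,\tu y)$. In particular it rewrites in the linear form $\tu A\cdot \tu F+\tu B$ of Lemma \ref{fundamencoreg} (but here interpreted over the integers). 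Since $\tu g$, $\tu h$, $\lengt$ and $Jump_\lengt$ are polynomial-time computable in the classical sense, the matrices $\tu A$ and $\tu B$ at each step are polynomial-time computable in $t$ and $\length{\tu y}$.

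Next I would bound the size of $\tu F$. Using the explicit solution formula of Lemma \ref{def:solutionexplicitedeuxvariables} (equivalently, Remark~\ref{rq:fund}), $\tu F(t,\tu y)$ is the sum of at most $t$ terms, each a product of $O(t)$ factors of the form $1+\tu A(\cdots)$. Since $\tu A$ and $\tu B$ are \polynomialb{} expressions essentially constant in $\tu F(x,\tu y)$, and since by induction on $s \le t$ the entries of $\tu A$ and $\tu B$ evaluated on $\tu F(s,\tu y)$ have length polynomial in $s$ and $\length{\tu y}$, a straightforward induction gives a polynomial $p$ with $\length{\tnorm{\tu F(t,\tu y)}} \le p(t,\length{\tu y})$. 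Substituting $t = \length{x}$ yields $\length{\tnorm{\tu f(x,\tu y)}} \le p(\length{x},\length{\tu y})$, which is polynomial in the lengths of the arguments, as required for a classical polynomial-time bound.

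Then I would describe the algorithm: on input $x,\tu y$, compute successively $\tu F(1,\tu y), \tu F(2,\tu y), \dots, \tu F(\length{x},\tu y)$ using the recurrence $\tu F(t+1,\tu y) = \tu F(t,\tu y) + \tu u(\tu F(t,\tu y), \tu h(2^t - 1,\tu y), 2^t - 1, \tu y)$. There are only $\length{x}$ iterations, and each requires evaluating $\tu h$ and the essentially linear expression on integers of length polynomial in $\length{x}$ and $\length{\tu y}$, so each iteration runs in polynomial time. Finally output $\tu F(\length{x},\tu y) = \tu f(x,\tu y)$.

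The main obstacle I anticipate is the size-bound argument: one must verify carefully that the essentially linear structure — i.e.\ the fact that $\tu A$ and $\tu B$ are essentially constant in $\tu F$, so their entries depend on $\tu F$ only through $\signb$ subterms of degree $0$ — prevents any super-polynomial blow-up of $\length{\tnorm{\tu F}}$ during the $\length{x}$ iterations. Without essential linearity, a general recurrence could square the value of $\tu F$ at every step and produce entries of length exponential in $\length{x}$; the whole point of the hypothesis is to reduce the explicit solution to a sum/product that grows only polynomially, which is precisely what the explicit formula in Remark \ref{rq:fund} provides. Once this size bound is in hand, the iterative evaluation is routine.
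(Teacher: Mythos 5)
Your proof is correct and follows essentially the same route as the paper: a change of variables via Lemma \ref{fundob}, followed by the fundamental observation that essential linearity keeps $\length{\tnorm{\tu F}}$ polynomially bounded (via the explicit solution formula) so that the $\length{x}$-step iterative evaluation runs in polynomial time — the paper simply cites this second part as Lemma \ref{fundamencore}, whereas you re-derive its content inline. The only point you gloss over is the general-$\lengt$ case implicit in the hypothesis on $Jump_\lengt$ (the paper first notes that the number of jump points is polynomial in $\length{x}+\length{\tu y}$, so the new parameter $t$ is polynomially bounded, before applying the change of variables), but in the special case $\lengt(x,\tu y)=\length{x}$ your argument coincides with the paper's.
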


\begin{proof}
 Using the hypotheses, it comes that the number of elements in $Jump_\lengt$ is polynomial in $\length{x}+\length{y}$. We can
  then replace parameter $x$ and derivation in $\lengt(x,\tu y)$ by a
  parameter $t\leq (\length{x}+\length{y})^c$, for some $c$, and derivation in $t$ by Lemma
  \ref{fundob}.
	
	This leads to an ODE of the form:
	$$
	\tu f’(x,\tu y)= \overline {\tu A} ( \tu f(x,\tu y),
	x,\tu y) \cdot
	\tu f(x,\tu y)
	+   \overline {\tu B} ( \tu f(x,\tu y),
	x,\tu y).
	$$
	by setting 
	\begin{eqnarray*}
		\overline {\tu A} ( \tu f(x,\tu y),
		x,\tu y) &=& {\tu A} ( \tu f(x,\tu y), h(x, \tu y),
		x,\tu y) \\
		\overline {\tu B} ( \tu f(x,\tu y),
		x,\tu y) &=&
		{\tu B} ( \tu f(x,\tu y), h(x, \tu y),
		x,\tu y).
	\end{eqnarray*}
	
	But then Lemma \ref{fundamencore} applies, and we get precisely the
	conclusion, observing that the fact that the
	corresponding matrix $\overline {\tu A}$ and vector $\overline {\tu
		B}$ are essentially constant in $\tu f(x, \tu y)$ guarantees
	hypotheses of Lemma \ref{fundamencore}. 
\end{proof}
}

\newpage

\end{document}